\newcommand{\ra}[1]{\renewcommand{\arraystretch}{#1}}
\newtheorem{theorem}{Theorem}[section]
\newtheorem*{theorem*}{Theorem}
\newtheorem{corollary}[theorem]{Corollary}
\newtheorem{lemma}[theorem]{Lemma}
\newtheorem*{lemma*}{Lemma}
\newtheorem{proposition}[theorem]{Proposition}
\theoremstyle{definition}
\newtheorem*{definition*}{Definition}
\theoremstyle{remark}
\newtheorem{remark}{Remark}[section]
\newtheorem*{notation}{Notation}
\numberwithin{equation}{section}
\newcommand{\figref}[1]{Figure~\ref{#1}}
\newcommand{\secref}[1]{Section~\ref{#1}}
\newcommand{\thmref}[1]{Theorem~\ref{#1}}
\newcommand{\lemref}[1]{Lemma~\ref{#1}}
\newcommand{\propref}[1]{Proposition~\ref{#1}}
\newcommand{\corref}[1]{Corollary~\ref{#1}}
\newcommand{\tabref}[1]{Table~\ref{#1}}
\newcommand{\appref}[1]{Appendix~\ref{#1}}
\begin{document}

\title{FPRAS VIA MCMC WHERE IT MIXES TORPIDLY (and very little effort)}

\author{Jin-Yi Cai}
\email{jyc@cs.wisc.edu}
\author{Tianyu Liu}
\email{tl@cs.wisc.edu}
\address{Department of Computer Sciences, University of Wisconsin--Madison.}
\thanks{Supported by NSF CCF-1714275}

\begin{abstract}
Is Fully Polynomial-time Randomized Approximation Scheme (FPRAS)
for a problem via an
MCMC algorithm possible when it is known that 
rapid
mixing provably fails? We introduce several weight-preserving maps for the eight-vertex model on planar and on
bipartite graphs, respectively. Some are one-to-one, while others are holographic which map 
superpositions of exponentially many states from one setting to another,
in a quantum-like many-to-many fashion.
In fact we introduce a set of such mappings that forms a group in each case.
Using some holographic maps and their compositions we obtain
FPRAS for the eight-vertex model at parameter settings where it is known that 
rapid mixing provably fails
due to an intrinsic barrier. This FPRAS is indeed the same MCMC algorithm, except its state space corresponds
to superpositions of the given states, where rapid mixing holds.
FPRAS is also given for torus graphs for parameter settings where natural Markov chains are known to mix torpidly. 
Our results  show that  the eight-vertex model is
the first problem with the provable property  that
while NP-hard to approximate on general graphs (even \#P-hard for
planar graphs in exact complexity),
it possesses FPRAS on both bipartite graphs and
planar graphs in substantial regions of its parameter space.
\end{abstract}


\begingroup
\def\uppercasenonmath#1{} 
\maketitle
\endgroup

\thispagestyle{empty}
\clearpage
\setcounter{page}{1}

\section{Introduction}\label{sec:intro}
Let $G$ be any 4-regular graph. We label
four incident edges of each vertex from 1 to 4.
The eight-vertex model on $G$ is defined as follows.
The states consist of \emph{even orientations}, i.e. all orientations
having an even number of arrows into (and out of) each vertex.
There are eight permitted types of local configurations around a vertex---hence the name eight-vertex model (see \figref{fig:orientations}).

\captionsetup[subfigure]{labelformat=empty}
\renewcommand{\thesubfigure}{-\arabic{subfigure}}
\begin{figure}[h!]
\centering
\begin{subfigure}[b]{0.12\linewidth}
\centering\includegraphics[width=\linewidth]{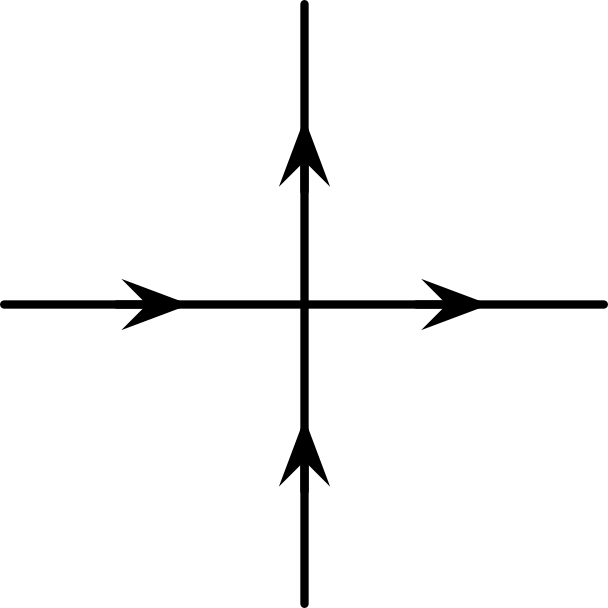}\caption{$1$}
\label{fig:orientations_1}
\end{subfigure}
\begin{subfigure}[b]{0.12\linewidth}
\centering\includegraphics[width=\linewidth]{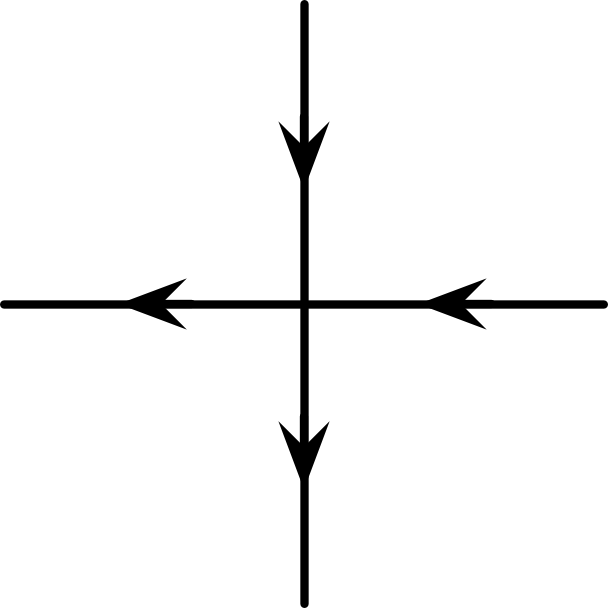}\caption{$2$}
\label{fig:orientations_2}
\end{subfigure}
\begin{subfigure}[b]{0.12\linewidth}
\centering\includegraphics[width=\linewidth]{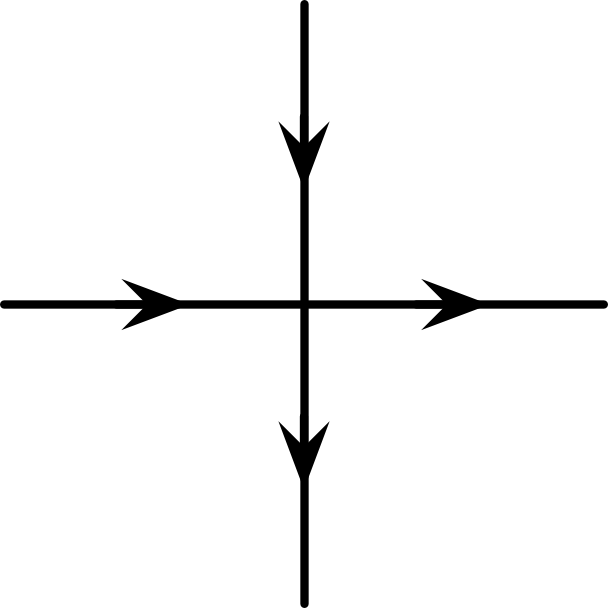}\caption{$3$}
\label{fig:orientations_3}
\end{subfigure}
\begin{subfigure}[b]{0.12\linewidth}
\centering\includegraphics[width=\linewidth]{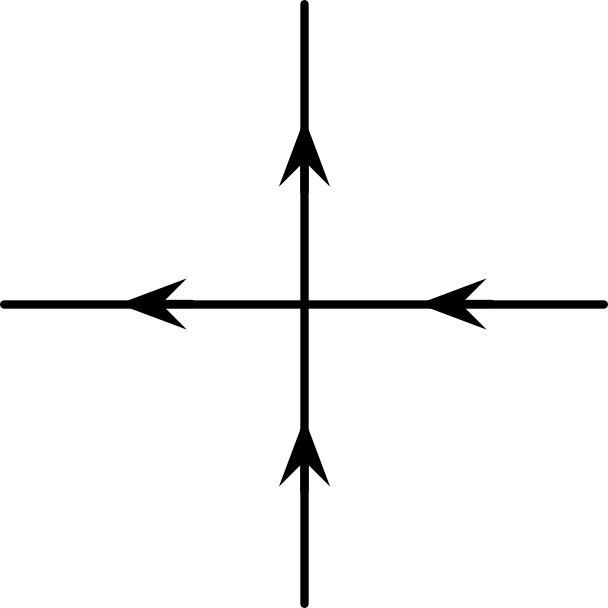}\caption{$4$}
\label{fig:orientations_4}
\end{subfigure}
\begin{subfigure}[b]{0.12\linewidth}
\centering\includegraphics[width=\linewidth]{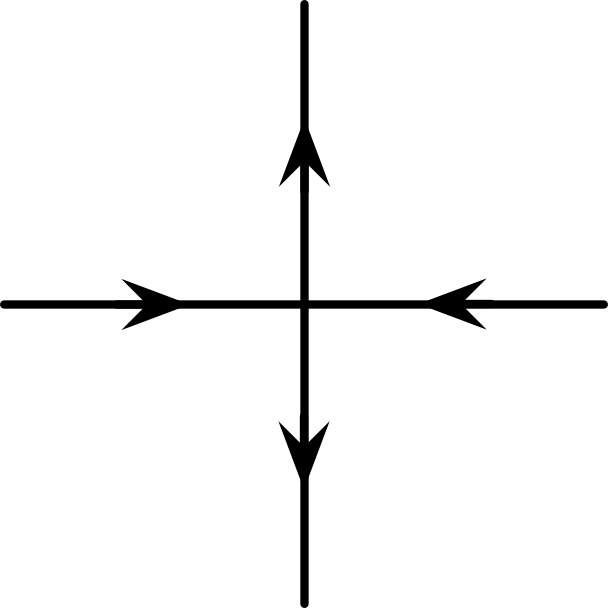}\caption{$5$}
\label{fig:orientations_5}
\end{subfigure}
\begin{subfigure}[b]{0.12\linewidth}
\centering\includegraphics[width=\linewidth]{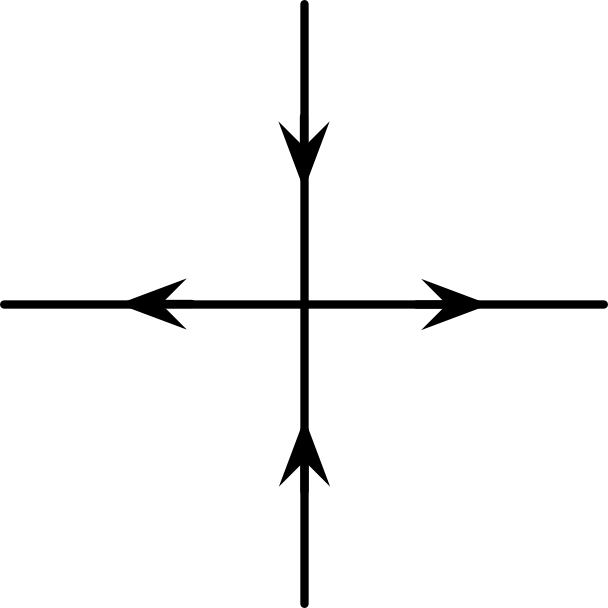}\caption{$6$}
\label{fig:orientations_6}
\end{subfigure}
\begin{subfigure}[b]{0.12\linewidth}
\centering\includegraphics[width=\linewidth]{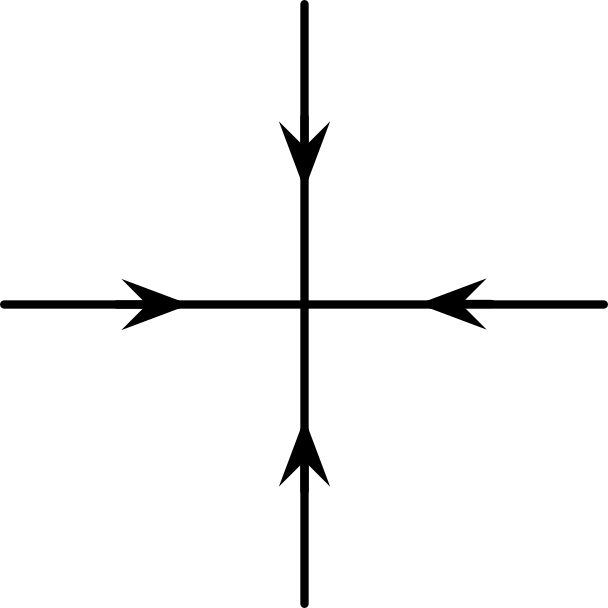}\caption{$7$}
\label{fig:orientations_7}
\end{subfigure}
\begin{subfigure}[b]{0.12\linewidth}
\centering\includegraphics[width=\linewidth]{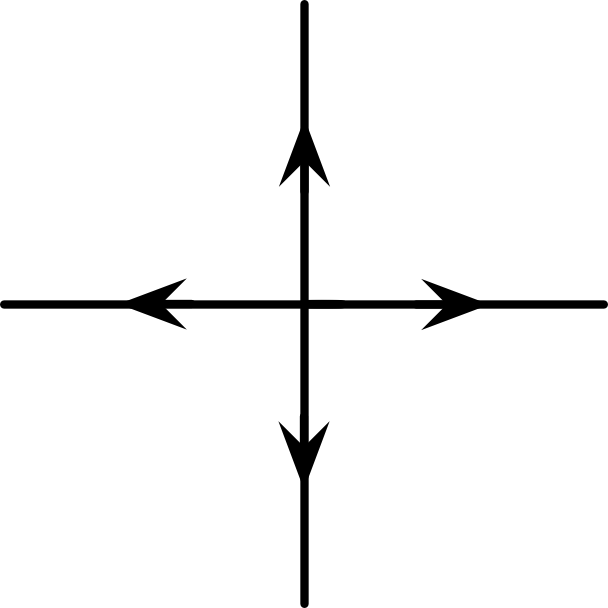}\caption{$8$}
\label{fig:orientations_8}
\end{subfigure}
\caption{Valid configurations of the eight-vertex model.}\label{fig:orientations}
\end{figure}

Classically, the eight-vertex model
is defined by statistical physicists on a square lattice region where each vertex of the lattice is connected by an edge to four nearest neighbors.
In general, the eight configurations 1 to 8 in \figref{fig:orientations} are associated with eight possible weights $w_1, \ldots, w_8$. Denote the set of
these eight local configurations by $S_\textsc{8V}$.
By physical considerations, the total weight of a state remains unchanged
if  all arrows are flipped,
assuming there is no external electric field.
In this case we write
$w_1 = w_2 = a$, $w_3 = w_4= b$, $w_5 = w_6 = c$, and $w_7 = w_8 = d$.
This complementary invariance is known as the \emph{arrow reversal symmetry} or the \emph{zero field assumption}.

Even in the zero-field setting, this model is already enormously expressive.
The special case when $d=0$ is the six-vertex model, which itself has 
sub-models such as the ice ($a = b = c$), KDP, and Rys $F$ models; on the square lattice, some other important models such as the dimer and zero-field Ising models can be reduced to it~\cite{BAXTER1972193}.
Together with ferromagnetic Ising and monomer-dimer models,
the six-vertex and eight-vertex models are among the
most studied models in statistical physics\footnote{A search in
  \textit{Google Scholar}
for ``six- and eight-vertex models'' returns ``About 153,000 results''.}.
Beyond physics, Kuperberg
gave a simplified proof of the famous alternating-sign matrix (ASM)
conjecture in combinatorics using a connection to
 the six-vertex model~\cite{doi:10.1155/S1073792896000128}.
Recently, the six-vertex model played an important role in explicating the phase transition of the Potts model and the random cluster model on the square lattice~\cite{duminilcopin2016discontinuity, ray2019short}.
After the eight-vertex model was introduced in 1970 by Sutherland~\cite{doi:10.1063/1.1665111}, and Fan and Wu~\cite{PhysRevB.2.723}, Baxter~\cite{PhysRevLett.26.832, BAXTER1972193} achieved a good understanding of the zero-field case in the thermodynamic limit on the square lattice (in physics this understanding of the limiting case is called ``exactly solved'').
 
In this paper, we assume the arrow reversal symmetry and our algorithmic and complexity results further assume that $a, b, c, d \ge 0$ (as is the case in \emph{classical} physics), unless otherwise explicitly stated. 
For any  4-regular graph $G$ (not just the grid or even planar graph,
however for plane graphs the edges are locally labeled from 1 to 4 cyclically), 
the \emph{partition function} of the eight-vertex model on $G$ with parameters
$(a, b, c, d)$
is defined as
\begin{equation}\label{Z-defn}
Z_{\textup{8V}}(G; a, b, c, d) = \sum_{\tau \in \mathcal{O}_{\bf e}(G)}a^{n_1 + n_2}b^{n_3 + n_4}c^{n_5 + n_6}d^{n_7 + n_8},
\end{equation}
where $\mathcal{O}_{\bf e}(G)$ is the set of all even orientations of $G$,
and $n_i$ is the number of vertices in type  $i$  in $G$ ($1 \le i \le 8$,
locally depicted as in     
Figure~\ref{fig:orientations}) 
under an even orientation $\tau \in \mathcal{O}_{\bf e}(G)$.

In terms of exact complexity, a dichotomy is given for the eight-vertex model on general 4-regular graphs for all eight (possibly complex) parameters~\cite{DBLP:journals/corr/CaiF17}.
This is studied in the context of a classification program for the complexity of counting problems~\cite{cai_chen_2017}, where the eight-vertex model serves as an important basic case for Holant problems defined by not necessarily symmetric constraint functions.
It is shown that
every setting is either P-time computable (and some are surprising) or \#P-hard.  
However, most cases for (exact) P-time tractability 
are due to nontrivial cancellations.
In our setting where $a, b, c, d$ are nonnegative real numbers, the problem of computing the partition function of the eight-vertex model is \#P-hard unless: (1) $a = b = c = d$ (this is equivalent to the unweighted case); 
(2) three of $a, b, c, d$ are zero; 
or (3) two of $a, b, c, d$ are zero and the other two are equal.
The full classification of the exact complexity for the eight-vertex model on planar graphs is still open, but in the full version of this paper we will show that in our setting where $a, b, c, d$ are nonnegative, the problem that is \#P-hard on general graphs remains \#P-hard on planar graphs except in the following cases where it becomes P-time computable: (1) $a^2 + b^2 = c^2 + d^2$ or (2) one of $a, b$ is zero and one of $c, d$ is zero.

Recently, the approximate complexity of counting and sampling of the eight-vertex model (and its special case, the six-vertex model) has been studied~\cite{Greenberg2010, liu:LIPIcs:2018:9456, doi:10.1137/1.9781611975482.136, DBLP:journals/corr/abs-1904-01495, DBLP:journals/corr/abs-1811-03126, DBLP:journals/corr/abs-1904-10493}. Interestingly, these results conform to the phase transition phenomenon in physics. In order to state the previous results and present our work, we adopt the following notations assuming $a, b, c, d \ge 0$.
\begin{itemize}
\item
$\mathcal{X} = \{\; (a,b,c,d) \; | \; a \le b+c+d,\;\; b \le a+c+d,\;\; c \le b+c+d,\;\; d \le a+b+c\}$;
\item
$\mathcal{Y} = \{\; (a,b,c,d) \; | \; a+d \le b+c,\;\; b+d \le a+c,\;\; c+d \le a+b\}$;
\item
$\mathcal{Z} = \{\; (a,b,c,d) \; | \;
a^2 \le b^2 + c^2 + d^2, \;\;b^2 \le a^2 + c^2 + d^2,\;\; c^2 \le a^2 + b^2 + d^2,\;\;d^2 \le a^2 + b^2 + c^2\}$.
\end{itemize}
\begin{remark}
$\mathcal{Y} \subset \mathcal{X}$ and
$\mathcal{Z} \subset \mathcal{X}$.
\end{remark}

Physicists have shown an \emph{order-disorder phase transition} for the eight-vertex model on the square lattice between parameter settings outside $\mathcal{X}$ and those inside (see Baxter's book~\cite{Baxter:book} for more details).
 Physicists  call $\mathcal{X}$  the \emph{disordered phase},
and its complement $\overline{\mathcal{X}}$, 
which consists of 4 disjoint regions in which one of
$(a, b, c, d)$ dominates, the \emph{ordered phases}.
In \cite{DBLP:journals/corr/abs-1811-03126} and \cite{DBLP:journals/corr/abs-1904-10493}, it was shown that: (1) approximating the partition function of the eight-vertex model on general 4-regular graphs outside $\mathcal{X}$ is NP-hard, (2) approximating the partition function of the eight-vertex model on general 4-regular graphs outside $\mathcal{Y}$ is at least as hard as approximately counting perfect matchings on general graphs (in short we say it is \#PM-hard), (3) there is an FPRAS\footnote{Suppose $f: \Sigma^* \rightarrow \mathbb{R}$ is a function mapping problem instances to real numbers. A \textit{fully polynomial randomized approximation scheme (FPRAS)} \cite{Karp:1983:MAE:1382437.1382804} for a problem is a randomized algorithm that takes as input an instance $x$ and $\varepsilon > 0$, running in time polynomial in the length $|x|$ and $\varepsilon^{-1}$, and outputs a number $Y$ (a random variable) such that
\(\operatorname{Pr}\left[(1 - \varepsilon)f(x) \le Y \le (1 + \varepsilon)f(x)\right] \ge \frac{3}{4}.\)} for general 4-regular graphs in $\mathcal{Y} \bigcap \mathcal{Z}$, and (4) there is an FPRAS for planar 4-regular graphs in a subregion of $\overline{\mathcal{Y}} \bigcap \mathcal{Z}$.
Note that all previous positive results are confined within $\mathcal{Z}$, in particular within the disordered phase $\mathcal{X}$.
See \figref{fig:complexity_landscape}.

\begin{figure}[h!]
\centering\includegraphics[width=0.7\linewidth]{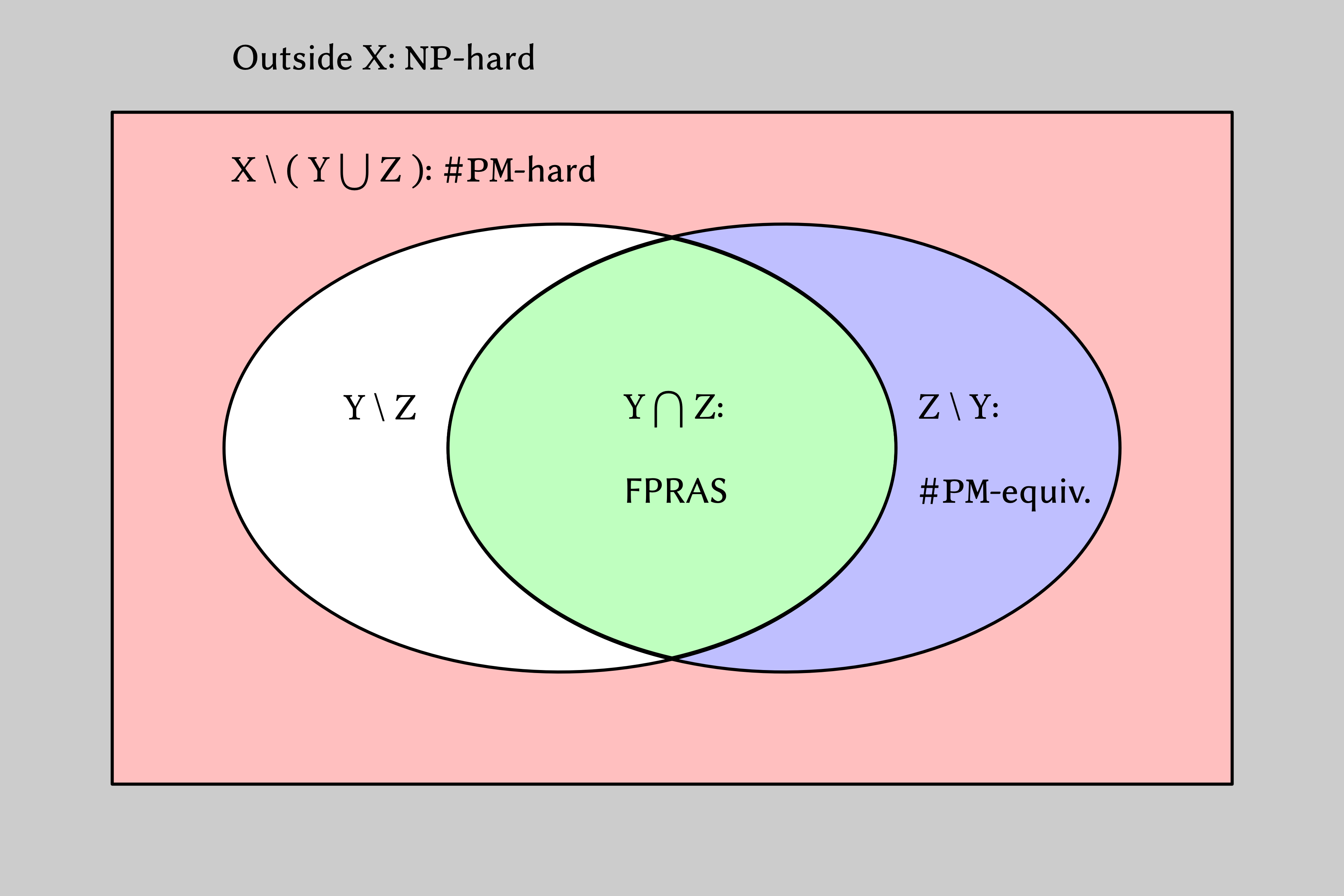}
\caption{A Venn diagram of the approximation complexity of the eight-vertex model on general 4-regular graphs.}
\label{fig:complexity_landscape}
\end{figure}

Previous FPRAS results in \cite{DBLP:journals/corr/abs-1811-03126} are based on the method of \emph{Markov chain Monte Carlo (MCMC)}.
A nice upper bound on the mixing time of a specific Markov chain can be achieved only in $\mathcal{Z}$ (which is a subregion of the disordered phase $\mathcal{X})$ using a \emph{canonical path argument}.
The canonical path argument was introduced for perfect matchings~\cite{doi:10.1137/0222066} and extends well to problems which are believed to have strong connections with perfect matchings~\cite{DBLP:journals/corr/abs-1301-2880}. 
We proved in \cite{DBLP:journals/corr/abs-1904-10493} that computing $Z_{\textup{8V}}(a, b, c, d)$ for any $(a, b, c, d) \in \mathcal{Z}$ can be reduced to the problem of counting perfect matchings by expressing the local constraints $(a, b, c, d)$ using ``matchgates'' whereas such expression provably fails for $(a, b, c, d) \not\in \mathcal{Z}$.
Moreover, in the ordered phases of the eight-vertex model 
(the four disjoint parts
of $\overline{\mathcal{X}}$), torpid mixing results of natural Markov chains were established on grid/torus graphs~\cite{Greenberg2010, liu:LIPIcs:2018:9456, DBLP:journals/corr/abs-1904-01495}.

In this paper, we give the first FPRAS for $Z_{\textup{8V}}(a, b, c, d)$ for $(a, b, c, d)$ outside $\mathcal{Z}$ on planar and on bipartite graphs.
We introduce a special edge-2-coloring model, called the \emph{even-coloring model} (\secref{sec:even-coloring}), and simultaneously set up two different kinds of relations between the partition functions of the eight-vertex model and the even-coloring model.
The first kind of relations exploit the property of planar and of bipartite graphs, and are one-to-one weight-preserving mappings between the states of the eight-vertex model $\mathcal{O}_{\bf e}(G)$ and the states of the even-coloring model $\mathcal{C}_{\bf e}(G)$; the second is by the method of \emph{holographic transformation} introduced by Valiant~\cite{Valiant:2008:HA:1350684.1350697} which can be thought of as an (exponentially many)-to-(exponentially many) map between the two state spaces (\secref{sec:holo_trans}).
This, magically, allows us to identify the partition functions of the eight-vertex model on the same graph under \emph{totally different} parameter settings.
Interestingly, we show that these maps and their compositions among different parameter settings under which the partition function is preserved have \emph{group structures}.
For planar graphs, this group is isomorphic to the \emph{symmetry group} $S_3$ on three elements (see \secref{sec:planar}); for bipartite graphs, this group is isomorphic to the \emph{dihedral group} $D_6$ of order 12
(the symmetry group of a regular hexagon, see \secref{sec:bipartite}).

Therefore, although the Markov chain on a graph under certain parameter settings outside $\mathcal{Z}$ is not rapidly mixing,
after ``mixing up'' the state space using a combination of two maps, the Markov chain turns out to be rapidly mixing.
Indeed, as a consequence, this ``indirect'' MCMC leads to FPRAS for new regions in the disordered phase $\mathcal{X}$ and, for the first time, in the ordered phases $\overline{\mathcal{X}}$ for planar graphs and for bipartite graphs.

\begin{theorem}\label{thm:main_planar}
Let $G$ be a 4-regular plane graph. There is an FPRAS for $Z_{\textup{8V}}(G; a, b, c, d)$ for $(a, b, c, d)$ in a subregion of $\mathcal{X} \bigcap \overline{\mathcal{Y}} \bigcap \overline{\mathcal{Z}}$ and in a subregion of $\overline{\mathcal{X}}$.
\end{theorem}

We have proved that
on general 4-regular graphs, approximating the eight-vertex model in $\mathcal{X} \bigcap \overline{\mathcal{Y}}$ is \#PM-hard~\cite{DBLP:journals/corr/abs-1904-10493} and in $\overline{\mathcal{X}}$ is NP-hard~\cite{DBLP:journals/corr/abs-1811-03126}.
Therefore, we have found a family of problems (with parameters ranging in a region of
parameter space) having the following
provable properties:
For the eight-vertex model in the subregion of $\overline{\mathcal{X}}$ given in \thmref{thm:main_planar} (described more explicitly in \corref{cor:planar}), computing $Z_{\textup{8V}}(a, b, c, d)$ is
\begin{enumerate}
\item
NP-hard to approximate on general 4-regular graphs~\cite{DBLP:journals/corr/abs-1811-03126}, and
\item
has an FPRAS on planar 4-regular graphs (this paper).
\end{enumerate}
This separation of complexity for general and for planar graphs
should be compared and contrasted with the \emph{FKT algorithm}~\cite{doi:10.1080/14786436108243366, KASTELEYN19611209, kasteleyn_book} for exact counting of perfect
matchings, but here for approximate counting.
Previously the combined results of \cite{10.1145/2785964}
and \cite{doi:10.1002/rsa.20560} proved a similar result for
$k$-colorings for general versus planar graphs.
We note that it was shown in \cite{GOLDBERG2015330} that approximating the partition functions of many two-state spin systems remain NP-hard on planar graphs.
A similar result was shown in \cite{Goldberg2012} for approximating the Tutte polynomial $T(G; x, y)$ in a large portion of the $(x,y)$ plane.
We can also prove that for the subregion, $Z_{\textup{8V}}(a, b, c, d)$ is
\#P-complete in exact computation even on planar graphs 
(we will include the proof in the extended version of this paper).

\begin{theorem}\label{thm:main_bipartite}
Let $G$ be a 4-regular bipartite graph. There is an FPRAS for $Z_{\textup{8V}}(G; a, b, c, d)$ for $(a, b, c, d)$ in a subregion of $\mathcal{X} \bigcap \overline{\mathcal{Y}}$ and in a subregion of $\overline{\mathcal{X}}$.
\end{theorem}

Note that the subregions mentioned in \thmref{thm:main_bipartite} are disjoint from those mentioned in \thmref{thm:main_planar}.
We have proved that
on general 4-regular graphs, approximating the eight-vertex model in $\mathcal{X} \bigcap \overline{\mathcal{Y}}$ is \#PM-hard~\cite{DBLP:journals/corr/abs-1904-10493} and in $\overline{\mathcal{X}}$ is NP-hard~\cite{DBLP:journals/corr/abs-1811-03126}.
Therefore, we have found a family of problems (with parameters ranging in a region of
parameter space) having the following
provable properties:
For the eight-vertex model in the subregion of $\overline{\mathcal{X}}$ given in \thmref{thm:main_bipartite} (described more explicitly in \corref{cor:bipartite}), computing $Z_{\textup{8V}}(a, b, c, d)$ is
\begin{enumerate}
\item
NP-hard to approximate on general 4-regular graphs~\cite{DBLP:journals/corr/abs-1811-03126}, and
\item
has an FPRAS on bipartite 4-regular graphs (this paper).
\end{enumerate}
Previously the only problem having similar properties is the antiferromagnetic Ising model on general versus bipartite graphs~\cite{10.2307/24519110, CAI2016690}.
We note that the problem of counting independent sets on bipartite graphs (\#BIS) 
is considered a canonical counting problem of intermediate approximation complexity~\cite{Dyer2004}. It is conjectured that \#BIS neither has an FPRAS nor
is NP-hard to approximate.
Many 2-spin systems on bipartite graphs are only known to be
 \#BIS-hard or \#BIS-equivalent (below NP-hard)
  to approximate~\cite{CAI2016690}.
%

To summarize, this paper establishes the eight-vertex model as
the first problem with the provable property that
while NP-hard to approximate on general graphs (even \#P-hard for  
planar graphs in exact complexity),
it possesses FPRAS on both bipartite graphs and
planar graphs in substantial regions of its parameter space.

A key property we use in our proof is that the dual of any planar 4-regular graph is bipartite~\cite{WELSH1969375}, and hence 2-colorable.
Since torus graphs $(\mathbb{Z}/m\mathbb{Z}) \times (\mathbb{Z}/n\mathbb{Z})$ with even $m, n$ also have this property and are bipartite, the results in
 \thmref{thm:main_planar}  and \thmref{thm:main_bipartite}  
also hold in torus graphs (with even side lengths).

Finally, we note that the techniques introduced in this paper have other applications.
First,
the maps between partition functions under different parameter settings established in this paper are not only useful for giving approximation algorithms. The same maps are useful in our understanding of the exact computational complexity of the eight-vertex model on planar graphs. 
Second,
the techniques are useful for studying the partition functions of other edge-orientation problems (e.g. other vertex models in statistical physics) or edge-coloring problems (e.g. Holant problems) on planar/bipartite graphs.

\bigskip

\section{The even-coloring model}\label{sec:even-coloring}
We introduce the following edge-2-coloring model on 4-regular graphs called the \emph{even-coloring model}: a valid configuration of this model assigns
either  \emph{green} or \emph{red} to every edge such that 
the number of green edges incident to any vertex is even (zero, two, or four).
Similar to the eight-vertex model, there are also eight valid local configurations around a vertex (shown in \figref{fig:even-colorings}), and configurations 1 to 8 in \figref{fig:even-colorings} can be associated with weights $w'_1, \dots, w'_8$ respectively.
Denote the set of
these eight local configurations by $S_\textsc{EC}$.

\begin{figure}[h!]
\centering
\begin{subfigure}[b]{0.12\linewidth}
\centering\includegraphics[width=\linewidth]{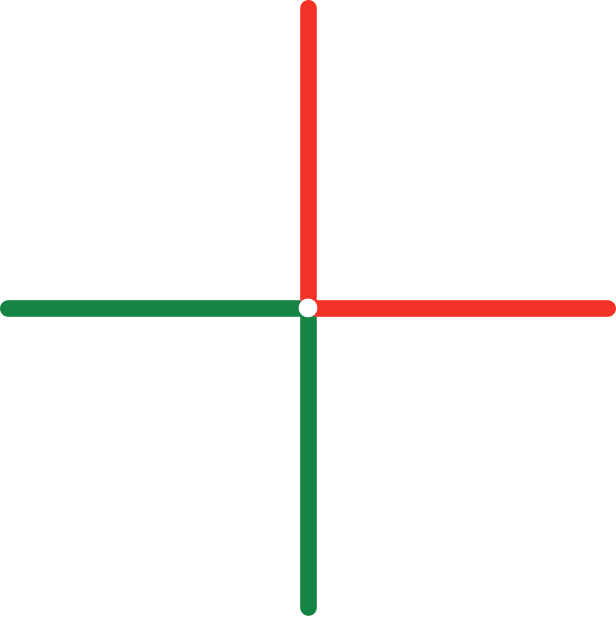}\caption{$1$}
\label{fig:coloring_1}
\end{subfigure}
\begin{subfigure}[b]{0.12\linewidth}
\centering\includegraphics[width=\linewidth]{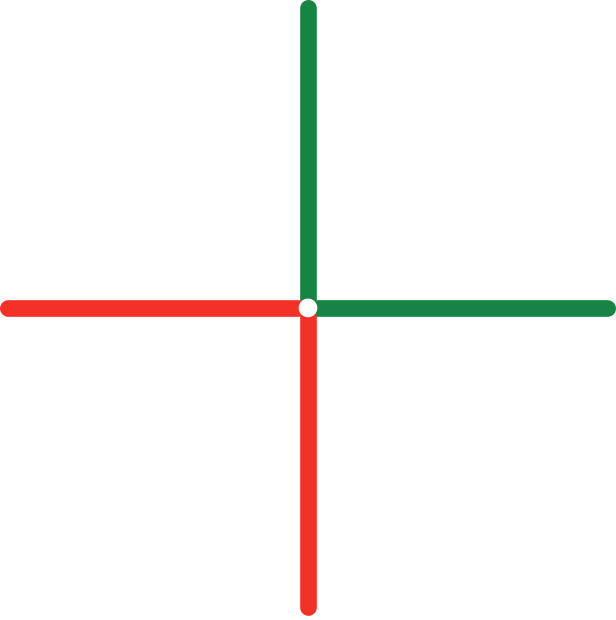}\caption{$2$}
\label{fig:coloring_2}
\end{subfigure}
\begin{subfigure}[b]{0.12\linewidth}
\centering\includegraphics[width=\linewidth]{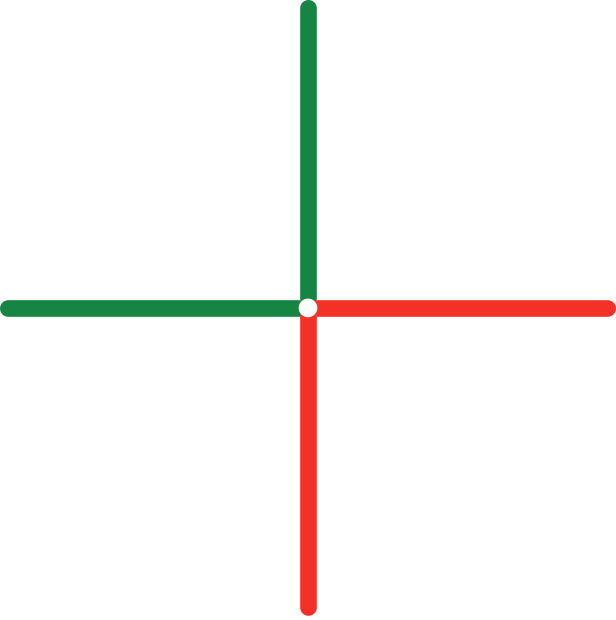}\caption{$3$}
\label{fig:coloring_3}
\end{subfigure}
\begin{subfigure}[b]{0.12\linewidth}
\centering\includegraphics[width=\linewidth]{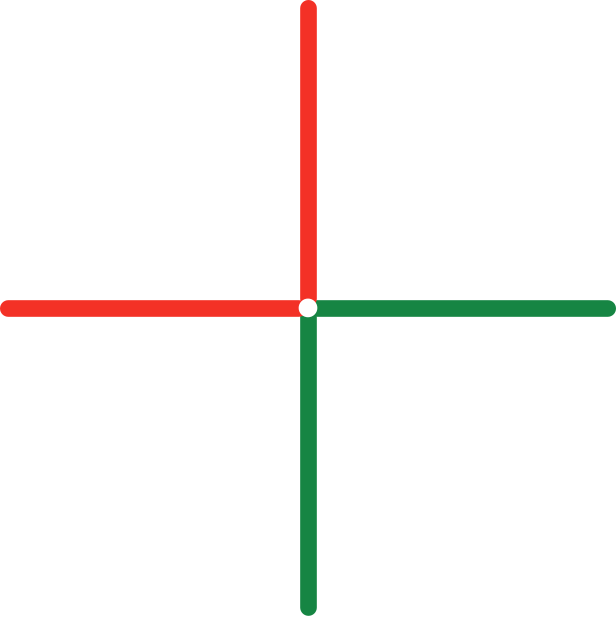}\caption{$4$}
\label{fig:coloring_4}
\end{subfigure}
\begin{subfigure}[b]{0.12\linewidth}
\centering\includegraphics[width=\linewidth]{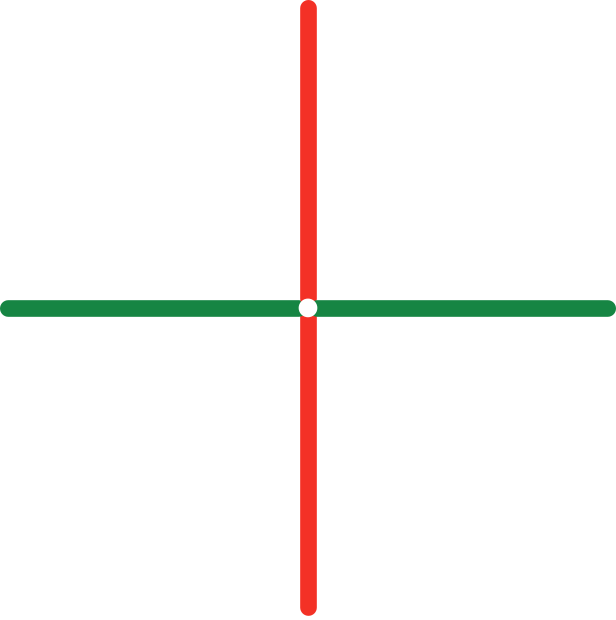}\caption{$5$}
\label{fig:coloring_5}
\end{subfigure}
\begin{subfigure}[b]{0.12\linewidth}
\centering\includegraphics[width=\linewidth]{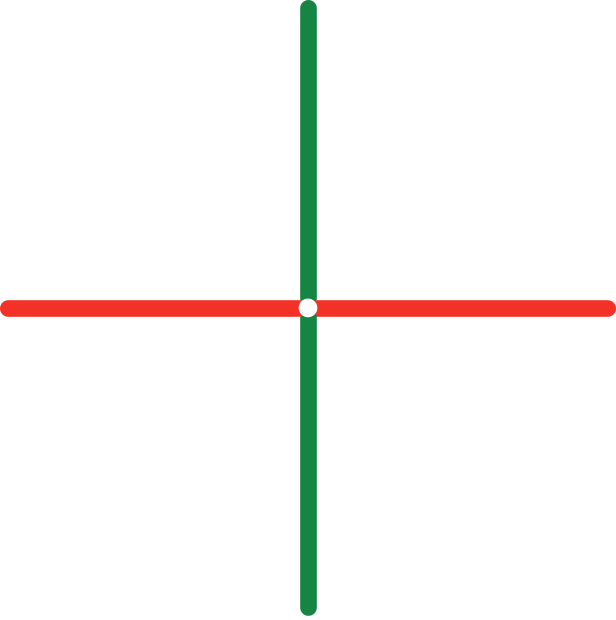}\caption{$6$}
\label{fig:coloring_6}
\end{subfigure}
\begin{subfigure}[b]{0.12\linewidth}
\centering\includegraphics[width=\linewidth]{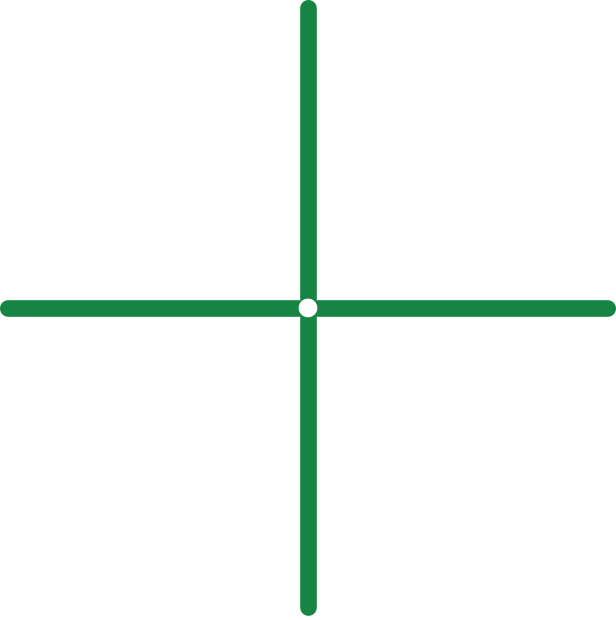}\caption{$7$}
\label{fig:coloring_7}
\end{subfigure}
\begin{subfigure}[b]{0.12\linewidth}
\centering\includegraphics[width=\linewidth]{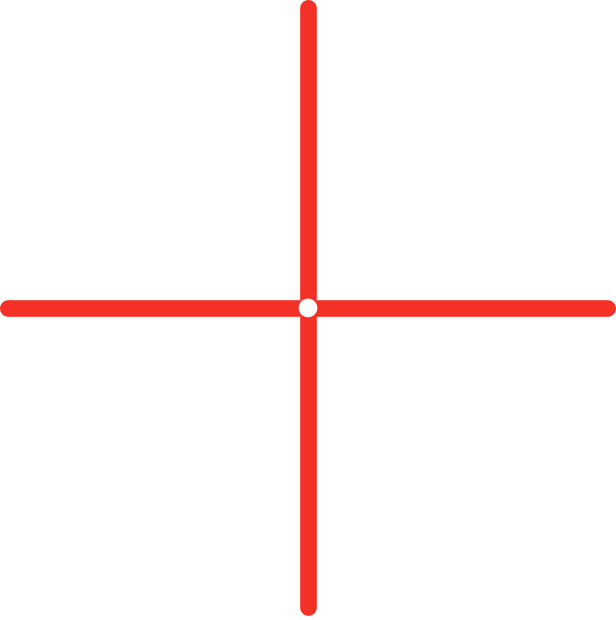}\caption{$8$}
\label{fig:coloring_8}
\end{subfigure}
\caption{Valid configurations of the even-coloring model.}\label{fig:even-colorings}
\end{figure}

To set up a correspondence between the even-coloring model and the eight-vertex model satisfying arrow reversal symmetry, we consider the even-coloring model 
with weights that satisfy the \emph{color reversal symmetry}. That is, the weight of a local configuration at a vertex remains unchanged if the color on every incident edge is changed.
In this case we write
$w'_1 = w'_2 = w$, $w'_3 = w'_4= x$, $w'_5 = w'_6 = y$, and $w'_7 = w'_8 = z$. 
 Given a  4-regular graph $G$, we label
four incident edges of each vertex
from 1 to 4.
The \emph{partition function} of the even-coloring model with parameters
$(w, x, y, z)$ on  $G$
is defined as
\begin{equation}\label{defn:even-coloring}
Z_{\textup{EC}}(G; w, x, y, z) = \sum_{\varsigma \in \mathcal{C}_{\bf e}(G)}w^{n_1 + n_2}x^{n_3 + n_4}y^{n_5 + n_6}z^{n_7 + n_8},
\end{equation}
where $\mathcal{C}_{\bf e}(G)$ is the set of all even colorings of $G$,
and $n_i$ is the number of vertices in type  $i$  in $G$ ($1 \le i \le 8$,
locally depicted as in     
\figref{fig:even-colorings}) 
under the even-coloring $\varsigma \in \mathcal{C}_{\bf e}(G)$.


\bigskip

\section{Holographic Transformation}\label{sec:holo_trans}


Given a 4-regular graph $G = (V, E)$,
the \emph{edge-vertex incidence graph}  $G' = (U_E, U_V, E')$
is a bipartite graph where
$(u_e, u_v) \in U_E \times U_V$ is an edge in $E'$ iff 
$e \in E$ in $G$  is incident to $v \in V$.
We model an orientation ($w \rightarrow v$)
 on an edge $e = \left\{w, v\right\} \in E$
 from $w$ into $v$ in $G$ by assigning
 $1$  to $(u_e, u_w) \in E'$  and  $0$ to $(u_e, u_v) \in E'$
in $G'$.
A configuration of the eight-vertex model on $G$
is  a
\emph{0-1 labeling} on $G'$,
namely $\sigma: E' \rightarrow \{0, 1\}$,
where for  each  $u_e \in U_E$ its two incident edges are
assigned 01 or 10, and  for  each $u_v \in U_V$ the  sum of
 values $\sum_{i=1}^4 \sigma(e_i) \equiv 0  \pmod 2$,
over the four incident edges of $u_v$.
Thus 
we model the even orientation rule of $G$ on all $v \in V$ by requiring ``two-0-two-1/four-0/four-1'' locally at
each vertex $u_v \in U_V$.

The ``one-0-one-1'' requirement on the two edges incident to a vertex in $U_E$ is a binary {\sc Disequality} constraint, denoted by $(\neq_2)$.
The values of a 4-ary \emph{constraint function} $f$  can be listed in a matrix $M(f) = \left[\begin{smallmatrix} f_{0000} & f_{0010} & f_{0001} & f_{0011} \\ f_{0100} & f_{0110} & f_{0101} & f_{0111} \\ f_{1000} & f_{1010} & f_{1001} & f_{1011} \\ f_{1100} & f_{1110} & f_{1101} & f_{1111}\end{smallmatrix}\right]$,
called the \emph{constraint matrix} of $f$. For the eight-vertex model 
satisfying the even orientation rule and arrow reversal symmetry, the constraint function $f$ at every vertex $v \in U_V$ in $G'$ 
has the form $M(f) = \left[\begin{smallmatrix} d & 0 & 0 & a \\ 0 & b & c & 0 \\ 0 & c & b & 0 \\ a & 0 & 0 & d \end{smallmatrix}\right]$, if we locally index the left, down, right, and up edges incident to $v$ by 1, 2, 3, and 4, respectively according to \figref{fig:orientations}.
Thus computing the partition function $Z_{\textup{8V}}(G; a, b, c, d)$ is equivalent to evaluating
\[\sum_{\sigma:E'\rightarrow\left\{0,1\right\}}\prod_{u\in U_E}(\neq_2)\left(\sigma |_{E'(u)}\right) \prod_{u\in U_V}f\left(\sigma |_{E'(u)}\right),\]
where $E'(u)$ denotes the incident edges of $u \in U_E \cup U_V$.
In fact, in this way we express the partition function of the eight-vertex model as the Holant sum in the framework for Holant problems:
\[Z_{\textup{8V}}(G; a, b, c, d) = \textup{Holant}\left(G'; \neq_2 |\ f\right)\]
where we use $\textup{Holant}(H; g\ |\ f)$ to denote the Holant sum 
$\sum_{\sigma:E\rightarrow\left\{0,1\right\}}\prod_{u\in U}g\left(\sigma |_{E(u)}\right) \prod_{u\in V}f\left(\sigma |_{E(u)}\right)$
on a bipartite graph $H = (U, V, E)$ for the Holant problem $\textup{Holant}(g\ |\ f)$.
Each vertex in $U$ (or $V$) is assigned the constraint function $g$ (or $f$, respectively).
The constraint function $g$ is written as a row vector, whereas the constraint function $f$ is written as a column vector, both as truth tables. (See \cite{cai_chen_2017} for more on Holant problems.)
The following proposition says that an invertible holographic transformation does not change the complexity of the Holant problem in the bipartite setting.
\begin{proposition}[\cite{Valiant:2008:HA:1350684.1350697}]\label{prop:holo_trans}
Suppose $T \in \mathbb{C}^2$ is an invertible matrix. Let $d_1 = \operatorname{arity}(g)$ and $d_2 = \operatorname{arity}(f)$. Define $g' = g \left(T^{-1}\right)^{\otimes d_1}$ and $f' = T^{\otimes d_2} f$.
Then for any bipartite graph $H$, $\textup{Holant}(H; g\ |\ f) = \textup{Holant}(H; g'\ |\ f')$.
\end{proposition}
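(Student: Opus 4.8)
The identity map is the archetypal holographic transformation, and the plan is to run the standard ``insert the identity on every edge'' argument, made rigorous as an interchange of summations. The only property of $T$ that we shall use is that, being invertible, $T^{-1}T = I_2$, the $2\times 2$ identity matrix. Pictorially: on each edge of $H$ insert a degree-$2$ vertex carrying $I_2 = T^{-1}T$ (which changes the Holant sum not at all), split that $I_2$ into a factor $T^{-1}$ adjacent to the $U$-endpoint and a factor $T$ adjacent to the $V$-endpoint, and then absorb all the $T^{-1}$'s meeting a vertex $u \in U$ into $g$ and all the $T$'s meeting a vertex $v \in V$ into $f$; absorbing produces exactly $g\,(T^{-1})^{\otimes d_1} = g'$ and $T^{\otimes d_2} f = f'$.

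Concretely, I would carry this out by unfolding the definitions of $g'$ and $f'$ inside $\textup{Holant}(H; g'\,|\,f')$. Write $H = (U,V,E)$ and let $\tau \colon E \to \{0,1\}$ be the summation variable. Using the row-vector convention for $g$ and the column-vector convention for $f$, and indexing the legs of $(T^{-1})^{\otimes d_1}$ and $T^{\otimes d_2}$ by the edges incident to the relevant vertex, each factor becomes $g'\big(\tau|_{E(u)}\big) = \sum_{\sigma_u} g(\sigma_u)\prod_{e \in E(u)} (T^{-1})_{\sigma_u(e),\,\tau(e)}$ and $f'\big(\tau|_{E(v)}\big) = \sum_{\rho_v}\big(\prod_{e \in E(v)} T_{\tau(e),\,\rho_v(e)}\big)\, f(\rho_v)$, where $\sigma_u$ and $\rho_v$ range over $0/1$ labelings of the edges at $u$ and at $v$ respectively. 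The care required here is purely bookkeeping: keeping straight which tensor leg of $T^{\otimes d}$ (resp.\ $(T^{-1})^{\otimes d}$) is contracted against which incident edge, consistently with the row/column conventions.

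Next I would push the sum over $\tau$ to the inside. This is the step where the bipartite structure is essential: each edge $e=(u,v)$ has exactly one endpoint in $U$ and one in $V$, so after collecting all factors the variable $\tau(e)$ occurs in precisely two of them, namely as $(T^{-1})_{\sigma_u(e),\tau(e)}$ from the $U$-side and as $T_{\tau(e),\rho_v(e)}$ from the $V$-side. Summing over $\tau(e)$ collapses this pair via $\sum_{\tau(e) \in \{0,1\}} (T^{-1})_{\sigma_u(e),\tau(e)}\, T_{\tau(e),\rho_v(e)} = (T^{-1}T)_{\sigma_u(e),\rho_v(e)} = \delta_{\sigma_u(e),\rho_v(e)}$. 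Doing this for every edge forces all the local labelings $\sigma_u$ and $\rho_v$ to agree on shared edges, so they glue into a single global labeling $\sigma \colon E \to \{0,1\}$, and the whole expression reduces to $\sum_\sigma \prod_{u} g(\sigma|_{E(u)}) \prod_{v} f(\sigma|_{E(v)}) = \textup{Holant}(H; g\,|\,f)$, as claimed.

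I do not expect a genuine obstacle: the computation is routine once the indices are set up. The one place I would write out with full care is the unfolding in the second paragraph — getting the contraction pattern of $T$ and $T^{-1}$ exactly right relative to the row/column conventions — since everything after it, namely the Kronecker-delta collapse and the reassembly of the global labeling $\sigma$, is immediate.
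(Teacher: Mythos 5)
Your proof is correct: this is the standard argument for Valiant's Holant theorem, and the index bookkeeping (row-vector $g$ contracting the second index of $T^{-1}$, column-vector $f$ contracting the second index of $T$, so that each edge variable collapses via $(T^{-1}T)_{\sigma_u(e),\rho_v(e)}=\delta_{\sigma_u(e),\rho_v(e)}$) is set up consistently with the paper's conventions. The paper itself gives no proof --- it states the proposition as a cited result of Valiant --- so there is nothing to compare against; your write-up supplies exactly the argument the citation points to.
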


We denote $\textup{Holant}(G; f) = \textup{Holant}(G'; =_2 |\ f)$.
For the even-coloring model, if we view a green-red edge coloring by a 0-1 assignment 
to the edges such that  an edge $e$ is assigned $0$ if it is colored green
and assigned  $1$ if it is colored red,
 then the partition function of the even-coloring model $Z_{\textup{EC}}(G; w, x, y, z)$ is exactly the
value of the Holant problem $\operatorname{Holant}\left(G; \left[\begin{smallmatrix} z & 0 & 0 & w \\ 0 & x & y & 0 \\ 0 & y & x & 0 \\ w & 0 & 0 & z \end{smallmatrix}\right]\right)$.

The following two lemmas show that
the eight-vertex model and the even-coloring model are connected via suitable
holographic transformations in unexpected ways as Holant problems.

\begin{lemma}\label{lem:pm-hard_holant}
Let $G$ be a 4-regular graph
and let $M_{Z} = 
\frac{1}{2} \left[\begin{smallmatrix} -1 & 1 & 1 & -1 \\ 1 & -1 & 1 & -1 \\ 1 & 1 & -1 & -1 \\ 1 & 1 & 1 & 1 \end{smallmatrix}\right]$.
Then
$Z_{\textup{8V}}(G; a, b, c, d) = Z_{\textup{EC}}(G; w, x, y, z)$ where
$\left[\begin{smallmatrix} w\\ x\\ y\\ z \end{smallmatrix}\right]
= M_{Z}
\left[\begin{smallmatrix} a\\ b\\ c\\ d \end{smallmatrix}\right]$.
\end{lemma}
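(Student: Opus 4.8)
The plan is to derive the identity from a single holographic transformation applied to the bipartite representation $Z_{\textup{8V}}(G;a,b,c,d) = \textup{Holant}(G'; \neq_2 \mid f)$. First I would look for an invertible $2\times 2$ matrix $T$ with $\neq_2\,(T^{-1})^{\otimes 2} = {=_2}$. Viewing $\neq_2$ as the matrix $\left[\begin{smallmatrix}0&1\\1&0\end{smallmatrix}\right]$, the transformed binary constraint $\neq_2\,(T^{-1})^{\otimes 2}$ has matrix $(T^{-1})^{\mathsf T}\left[\begin{smallmatrix}0&1\\1&0\end{smallmatrix}\right]T^{-1}$, so the requirement is exactly $T^{\mathsf T}T = \left[\begin{smallmatrix}0&1\\1&0\end{smallmatrix}\right]$. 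This has no real solution, but $T = \tfrac{1}{\sqrt 2}\left[\begin{smallmatrix}1&1\\i&-i\end{smallmatrix}\right]$ works (it is invertible, $\det T = -i$, and $T^{\mathsf T}T = \left[\begin{smallmatrix}0&1\\1&0\end{smallmatrix}\right]$). Applying Proposition~\ref{prop:holo_trans} with $g = \neq_2$ and the eight-vertex function $f$ then gives
\[ Z_{\textup{8V}}(G;a,b,c,d) = \textup{Holant}(G'; \neq_2 \mid f) = \textup{Holant}\bigl(G'; {=_2}\mid f'\bigr) = \textup{Holant}(G; f'), \qquad f' := T^{\otimes 4}f, \]
the last step by the convention $\textup{Holant}(G;\cdot) = \textup{Holant}(G';{=_2}\mid\cdot)$. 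It then remains to show that $M(f')$ is the even-coloring constraint matrix whose parameters $(w,x,y,z)$ are the entries of $M_Z\left[\begin{smallmatrix}a\\b\\c\\d\end{smallmatrix}\right]$.

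To compute $f' = T^{\otimes 4}f$, I would write $f$ as the sum of the eight rank-one tensors supported on its nonzero entries and push $T$ through each tensor factor; with $p := Te_0 = \tfrac1{\sqrt2}(1,i)^{\mathsf T}$ and $q := Te_1 = \tfrac1{\sqrt2}(1,-i)^{\mathsf T}$ this yields
\[ f' = d\bigl(p^{\otimes 4}+q^{\otimes 4}\bigr) + a\bigl(p^{\otimes 2}\otimes q^{\otimes 2}+q^{\otimes 2}\otimes p^{\otimes 2}\bigr) + b\bigl(p\otimes q^{\otimes 2}\otimes p+q\otimes p^{\otimes 2}\otimes q\bigr) + c\bigl(p\otimes q\otimes p\otimes q+q\otimes p\otimes q\otimes p\bigr). \]
Before grinding out entries, I would observe that $M(f')$ must already have the even-coloring block shape, using two symmetries of $f'$. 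Since $Zp = q$ and $Zq = p$ for $Z = \operatorname{diag}(1,-1)$, every summand above is fixed by $Z^{\otimes 4}$, hence $Z^{\otimes 4}f' = f'$; as $(Z^{\otimes 4}f')_\beta = (-1)^{|\beta|}f'_\beta$ where $|\beta|$ is the Hamming weight, this forces $f'$ to vanish on all odd-weight configurations. Similarly $Xp = iq$ and $Xq = -ip$ for $X = \left[\begin{smallmatrix}0&1\\1&0\end{smallmatrix}\right]$, and in each summand the accumulated scalar collapses to $1$ (for instance $i^4 = 1$ and $i^2(-i)^2 = 1$), so $X^{\otimes 4}f' = f'$, i.e.\ $f'_\beta = f'_{\bar\beta}$ with $\bar\beta$ the bitwise complement. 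Together these two invariances pin $M(f')$ to the form $\left[\begin{smallmatrix}z&0&0&w\\0&x&y&0\\0&y&x&0\\w&0&0&z\end{smallmatrix}\right]$ with $z = f'_{0000}$, $w = f'_{0011}$, $x = f'_{0110}$, $y = f'_{0101}$.

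It then remains only to evaluate these four representatives from the displayed formula for $f'$. Each comes out as $\tfrac12$ times a $\pm1$-combination of $a,b,c,d$: one finds $f'_{0000} = \tfrac12(a+b+c+d)$, $f'_{0011} = \tfrac12(-a+b+c-d)$, $f'_{0110} = \tfrac12(a-b+c-d)$, $f'_{0101} = \tfrac12(a+b-c-d)$, and these rows assemble into precisely $M_Z$. Thus $f'$ is the even-coloring constraint function with parameters $\left[\begin{smallmatrix}w\\x\\y\\z\end{smallmatrix}\right] = M_Z\left[\begin{smallmatrix}a\\b\\c\\d\end{smallmatrix}\right]$, so $\textup{Holant}(G;f') = Z_{\textup{EC}}(G;w,x,y,z)$, which gives the lemma.

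I expect the only real difficulty to be bookkeeping rather than anything conceptual: choosing the correct (complex, non-Hadamard) $T$, respecting the nonstandard column order $00,10,01,11$ in $M(\cdot)$, and — the step most prone to error — confirming that the holographic image $f'$ still has the even-coloring form. The $X^{\otimes 4}$/$Z^{\otimes 4}$ symmetry argument is what keeps that last point clean, after which the evaluation of the four representative entries is routine.
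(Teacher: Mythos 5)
Your proposal is correct and follows essentially the same route as the paper: the identical holographic transformation $T=Z=\tfrac{1}{\sqrt 2}\left[\begin{smallmatrix}1&1\\ i&-i\end{smallmatrix}\right]$ converting $\neq_2$ to $=_2$, the same chain of Holant identities, and the same resulting entries of $M(Z^{\otimes 4}f)$, which match the paper's stated matrix and the rows of $M_Z$. The only difference is that you carry out in detail the ``direct calculation'' the paper leaves implicit, with the $X^{\otimes 4}$/$\operatorname{diag}(1,-1)^{\otimes 4}$ invariance argument serving as a clean (and correct) way to establish the even-coloring block structure before evaluating the four representative entries.
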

\begin{proof}
Using the binary disequality function $(\not=_2)$ for
the orientation of any edge, we can express  the partition function of the eight-vertex model $G$ as a Holant problem on its edge-vertex incidence graph $G'$,
\[Z_{\textup{8V}}(G; a, b, c, d) = \textup{Holant}\left(G'; \neq_2 |\ f\right),\]
where $f$ is the 4-ary signature with $M(f) = \left[\begin{smallmatrix} d & 0 & 0 & a \\ 0 & b & c & 0 \\ 0 & c & b & 0 \\ a & 0 & 0 & d \end{smallmatrix}\right]$.
Note that, writing the truth table of
$(\neq_2) = (0, 1, 1, 0)$ as a vector
and multiplied by a tensor power of the matrix $Z^{-1}$, where  
$Z = \frac{1}{\sqrt{2}}\left[\begin{smallmatrix} 1 & 1 \\ i & -i \end{smallmatrix}\right]$
we get
$(\neq_2) (Z^{-1})^{\otimes 2}
= (1, 0, 0, 1)$, which is exactly the
truth table of the binary equality function
$(=_2)$.  Then according to \propref{prop:holo_trans},
by the $Z$-transformation, we get 
\begin{align*}
\textup{Holant}\left(G'; \neq_2 |\ f\right)
& = \textup{Holant}\left(G'; \neq_2 \cdot \left( Z^{-1} \right)^{\otimes 2} |\ Z^{\otimes 4} \cdot f\right) \\
& = \textup{Holant}\left(G'; =_2 |\ Z^{\otimes 4} f\right)\\
& = \textup{Holant}\left(G; \ Z^{\otimes 4} f\right),
\end{align*}
and a direct calculation shows that $M(Z^{\otimes 4} f) = \frac{1}{2}
\left[\begin{smallmatrix} a + b + c + d & 0 & 0 & - a + b + c - d \\ 0 & a - b + c - d & a +b - c - d & 0 \\ 0 & a +b - c - d & a - b + c - d & 0 \\ - a + b + c - d & 0 & 0 & a + b + c + d \end{smallmatrix}\right]$.
\end{proof}

Readers are referred to \appref{app:holo_trans} for a more insightful
explanation on why the arity-4 constraint function $f$ is transformed to a \emph{real-valued} constraint function, under the \emph{complex-valued} $Z$-transformation.

\begin{lemma}\label{lem:holo_trans}
Let $G$ be a 4-regular graph
and let $M_{HZ} = 
\frac{1}{2} \left[\begin{smallmatrix} -1 & 1 & 1 & 1 \\ 1 & -1 & 1 & 1 \\ 1 & 1 & -1 & 1 \\ 1 & 1 & 1 & -1 \end{smallmatrix}\right]$.
Then
$Z_{\textup{8V}}(G; a, b, c, d) = Z_{\textup{EC}}(G; w, x, y, z)$ where
$\left[\begin{smallmatrix} w\\ x\\ y\\ z \end{smallmatrix}\right]
= M_{HZ}
\left[\begin{smallmatrix} a\\ b\\ c\\ d \end{smallmatrix}\right]$.
\end{lemma}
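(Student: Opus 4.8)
The plan is to run the same argument as in the proof of \lemref{lem:pm-hard_holant}, but with the complex $Z$-transformation replaced by the composite transformation $HZ$, where $H=\frac{1}{\sqrt{2}}\left[\begin{smallmatrix}1&1\\1&-1\end{smallmatrix}\right]$ is the (normalized, hence real orthogonal) Hadamard matrix; this is the source of the subscript in $M_{HZ}$. As before, write the eight-vertex partition function as $Z_{\textup{8V}}(G;a,b,c,d)=\textup{Holant}(G';\neq_2\ |\ f)$ on the edge--vertex incidence graph $G'$, where $M(f)=\left[\begin{smallmatrix}d&0&0&a\\0&b&c&0\\0&c&b&0\\a&0&0&d\end{smallmatrix}\right]$.

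Next I would check that $HZ$ still converts $(\neq_2)$ into $(=_2)$ on the row-vector side. Since $H$ is an involution, $(HZ)^{-1}=Z^{-1}H^{-1}=Z^{-1}H$, so
\[(\neq_2)\bigl((HZ)^{-1}\bigr)^{\otimes 2}=\Bigl((\neq_2)(Z^{-1})^{\otimes 2}\Bigr)H^{\otimes 2}=(=_2)\,H^{\otimes 2}=(=_2),\]
using the identity $(\neq_2)(Z^{-1})^{\otimes 2}=(=_2)$ already established in \lemref{lem:pm-hard_holant} together with the one-line fact that the binary equality signature is fixed by the orthogonal transformation $H^{\otimes 2}$. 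Applying \propref{prop:holo_trans} with $T=HZ$, $d_1=\operatorname{arity}(\neq_2)=2$, $d_2=\operatorname{arity}(f)=4$ then gives, just as in \lemref{lem:pm-hard_holant},
\[\textup{Holant}(G';\neq_2\ |\ f)=\textup{Holant}\bigl(G';=_2\ |\ (HZ)^{\otimes 4}f\bigr)=\textup{Holant}\bigl(G;(HZ)^{\otimes 4}f\bigr).\]

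It then remains to compute $M\bigl((HZ)^{\otimes 4}f\bigr)=H^{\otimes 4}\bigl(Z^{\otimes 4}f\bigr)$ and recognize it as an even-coloring signature with the advertised weights. Here I would reuse the computation in \lemref{lem:pm-hard_holant}, which already presents $Z^{\otimes 4}f$ as the even-coloring signature $\left[\begin{smallmatrix}z'&0&0&w'\\0&x'&y'&0\\0&y'&x'&0\\w'&0&0&z'\end{smallmatrix}\right]$ with $\left[\begin{smallmatrix}w'\\x'\\y'\\z'\end{smallmatrix}\right]=M_Z\left[\begin{smallmatrix}a\\b\\c\\d\end{smallmatrix}\right]$, and then determine the action of $H^{\otimes 4}$ on a general even-coloring signature --- most cleanly through the identity $(H^{\otimes 4}g)(x)=\frac{1}{4}\sum_{y\in\{0,1\}^4}(-1)^{\sum_i x_i y_i}g(y)$, which shows that $H^{\otimes 4}$ preserves the even-coloring form and transforms its weight vector $\left[\begin{smallmatrix}w\\x\\y\\z\end{smallmatrix}\right]$ by a fixed matrix $N=\frac{1}{2}\left[\begin{smallmatrix}1&-1&-1&1\\-1&1&-1&1\\-1&-1&1&1\\1&1&1&1\end{smallmatrix}\right]$. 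Consequently $M\bigl((HZ)^{\otimes 4}f\bigr)$ is the even-coloring signature with weight vector $NM_Z\left[\begin{smallmatrix}a\\b\\c\\d\end{smallmatrix}\right]$; one then checks by a $4\times 4$ matrix multiplication that $NM_Z=M_{HZ}$, i.e.\ that $M\bigl((HZ)^{\otimes 4}f\bigr)=\frac{1}{2}\left[\begin{smallmatrix}a+b+c-d&0&0&-a+b+c+d\\0&a-b+c+d&a+b-c+d&0\\0&a+b-c+d&a-b+c+d&0\\-a+b+c+d&0&0&a+b+c-d\end{smallmatrix}\right]$. Combining these steps, $Z_{\textup{8V}}(G;a,b,c,d)=\textup{Holant}(G;(HZ)^{\otimes 4}f)=Z_{\textup{EC}}(G;w,x,y,z)$ with $\left[\begin{smallmatrix}w\\x\\y\\z\end{smallmatrix}\right]=M_{HZ}\left[\begin{smallmatrix}a\\b\\c\\d\end{smallmatrix}\right]$, as claimed.

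The only genuinely computational step --- and the one I expect to be the main obstacle --- is the identification of $M\bigl((HZ)^{\otimes 4}f\bigr)$: either computing the induced action $N$ of $H^{\otimes 4}$ on the even-coloring family and then the product $NM_Z$, or, equivalently, computing the $16$ entries of $(HZ)^{\otimes 4}f$ directly in one pass exactly as \lemref{lem:pm-hard_holant} does for $Z^{\otimes 4}f$. Neither is deep, but care is needed to apply the row/column ordering conventions of $M(\cdot)$ consistently; the clean relation $M_{HZ}=M_Z\cdot\operatorname{diag}(1,1,1,-1)$ (equivalently $N=M_Z\operatorname{diag}(1,1,1,-1)M_Z^{-1}$, using that $M_Z$ is orthogonal) serves as a useful sanity check on the final answer.
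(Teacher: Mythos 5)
Your proposal is correct and follows essentially the same route as the paper: the same holographic transformation by $HZ$, the same observation that $(\neq_2)\cdot\left((HZ)^{-1}\right)^{\otimes 2}=(=_2)$ because $H$ is orthogonal (equivalently, an involution), and the same final signature $M\left((HZ)^{\otimes 4}f\right)$, which matches the paper's. The only difference is bookkeeping: you organize the concluding computation by factoring $(HZ)^{\otimes 4}f=H^{\otimes 4}\left(Z^{\otimes 4}f\right)$ and reusing \lemref{lem:pm-hard_holant} together with the induced action $N$ of $H^{\otimes 4}$ on even-coloring signatures (your $N$ and the identity $NM_Z=M_{HZ}$ both check out), whereas the paper simply performs the direct calculation in one pass.
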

\begin{proof}
For the eight-vertex model as a Holant problem $\textup{Holant}\left(G'; \neq_2 |\ f\right)$,
we perform a holographic transformation by the matrix
$\frac{1}{2}
\left[\begin{smallmatrix} 1+i & 1-i \\ 1-i & 1+i \end{smallmatrix}\right]$.
We note that this is
 the composition of a $Z$-transformation and an $H$-transformation where $Z = \frac{1}{\sqrt{2}}\left[\begin{smallmatrix} 1 & 1 \\ i & -i \end{smallmatrix}\right]$ and $H = \frac{1}{\sqrt{2}}\left[\begin{smallmatrix} 1 & 1 \\ 1 & -1 \end{smallmatrix}\right]$, namely
$\frac{1}{2}
\left[\begin{smallmatrix} 1+i & 1-i \\ 1-i & 1+i \end{smallmatrix}\right]
= HZ$. Then 
\begin{align*}
Z_{\textup{8V}}(G; a, b, c, d) & = \textup{Holant}\left(G'; \neq_2 |\ f\right)\\
& = \textup{Holant}\left(G'; (\neq_2) \cdot \left( (HZ)^{-1} \right)^{\otimes 2} |\ (HZ)^{\otimes 4} \cdot f\right) \\
& = \textup{Holant}\left(G'; =_2 |\ (HZ)^{\otimes 4} f\right)\\
& = \textup{Holant}\left(G; \ (HZ)^{\otimes 4} f\right).
\end{align*}
Here $(\neq_2) \cdot \left( (HZ)^{-1} \right)^{\otimes 2} 
= (\neq_2) \cdot \left( Z^{-1} \right)^{\otimes 2}  \cdot \left( H^{-1} \right)^{\otimes 2} 
= (=_2) \cdot \left( H^{-1} \right)^{\otimes 2}
= (=_2)$, because $H$ is orthogonal.
Now a direct calculation shows that $M((HZ)^{\otimes 4} f) = \frac{1}{2}
\left[\begin{smallmatrix} a + b + c - d & 0 & 0 & - a + b + c + d \\ 0 & a - b + c + d & a + b - c + d & 0 \\ 0 & a + b - c + d & a - b + c + d & 0 \\ - a + b + c + d & 0 & 0 & a + b + c - d \end{smallmatrix}\right]$.
\end{proof}

\bigskip

\section{Planar graphs}\label{sec:planar}
\begin{lemma}\label{lem:planar}
Let $G$ be a 4-regular plane graph. Then 
\(Z_{\textup{8V}}(G; a, b, c, d) = Z_{\textup{EC}}(G; b, a, d, c)\).
\end{lemma}
\begin{proof}
It is well known that a connected planar graph is Eulerian if and only if its dual is bipartite~\cite{WELSH1969375}. 
Partition functions are multiplicative over connected components,
so we may assume $G$ is connected.
As $G$ is planar and 4-regular, the dual $G^*$ is bipartite.
Hence, we can color the faces of $G$ using two colors, say black and white, so that any two adjacent faces
(i.e., they share an edge) are of different colors. For definiteness, we assume that the outer face of $G$ is colored white.
See \figref{fig:planar_blackwhite} for an example.
Every edge separates one face colored white and another face colored black, so each edge is on a unique white face.
This shows that the binary relation on the set of edges
 defined by being on the same white face is an equivalence relation.

\captionsetup[subfigure]{labelformat=parens}
\renewcommand{\thesubfigure}{\textsc{\alph{subfigure}}}
\begin{figure}[h!]
\centering
\begin{subfigure}[t]{0.42\linewidth}
\centering
\includegraphics[width=0.7\linewidth]{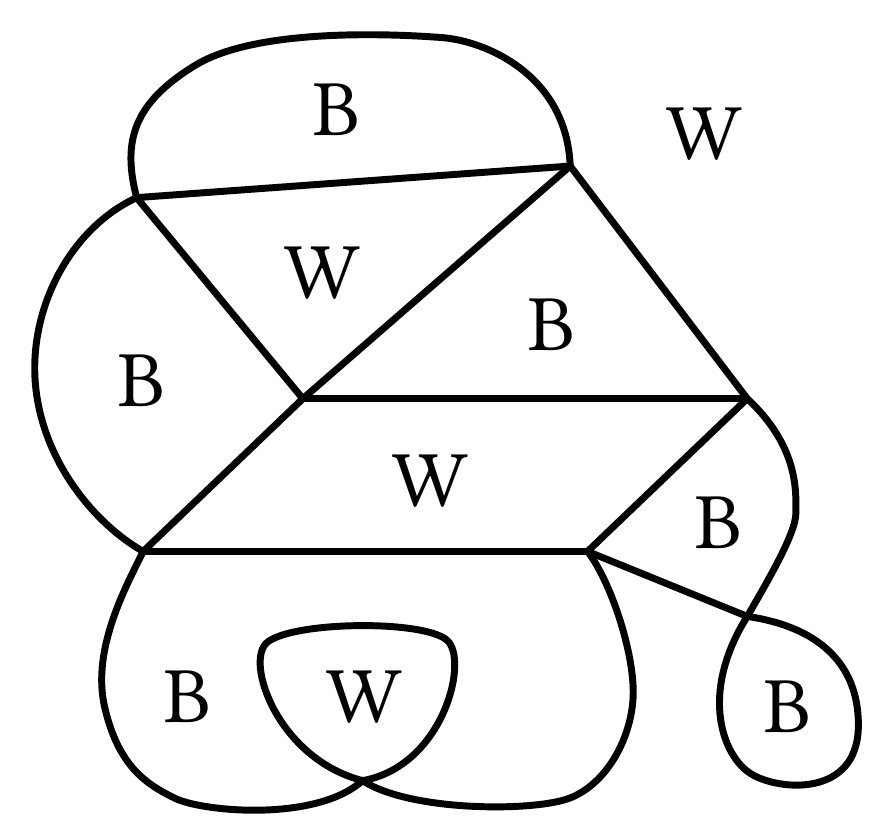}
\caption{}
\label{fig:planar_blackwhite}
\end{subfigure}
\begin{subfigure}[t]{0.42\linewidth}
\centering
\includegraphics[width=0.7\linewidth]{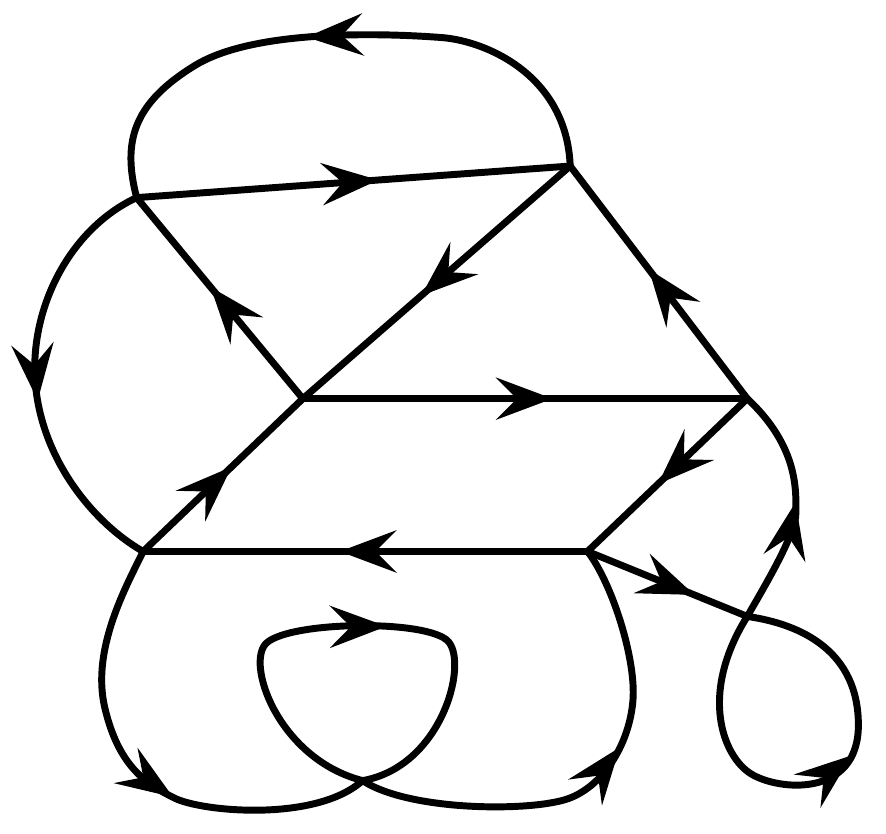}
\caption{}
\label{fig:planar_canonical}
\end{subfigure}
\caption{A proper 2-coloring of the faces of a planar 4-regular graph and its canonical orientation $\tau$.}
\end{figure}

Based on the above facts, there is a \emph{canonical orientation} $\tau$ 
which orients all the edges along any (non-outer) white face \emph{clockwise}.
This is also the same as to orient edges along every black face \emph{counterclockwise}. See \figref{fig:planar_canonical} for a pictorial illustration.
Observe that $\tau$ is an Eulerian orientation of $G$ (at every vertex the in-degree equals the out-degree) and in $\tau$ every vertex is in the 5th local configuration in \figref{fig:blackwhite} (and equivalently 
the 6th local configuration  in \figref{fig:whiteblack}).

\captionsetup[subfigure]{labelformat=parens}
\renewcommand{\thesubfigure}{\textsc{\alph{subfigure}}}
\begin{figure}[h!]
\centering
\begin{subfigure}[t]{0.42\linewidth}
\centering
\includegraphics[width=0.7\linewidth]{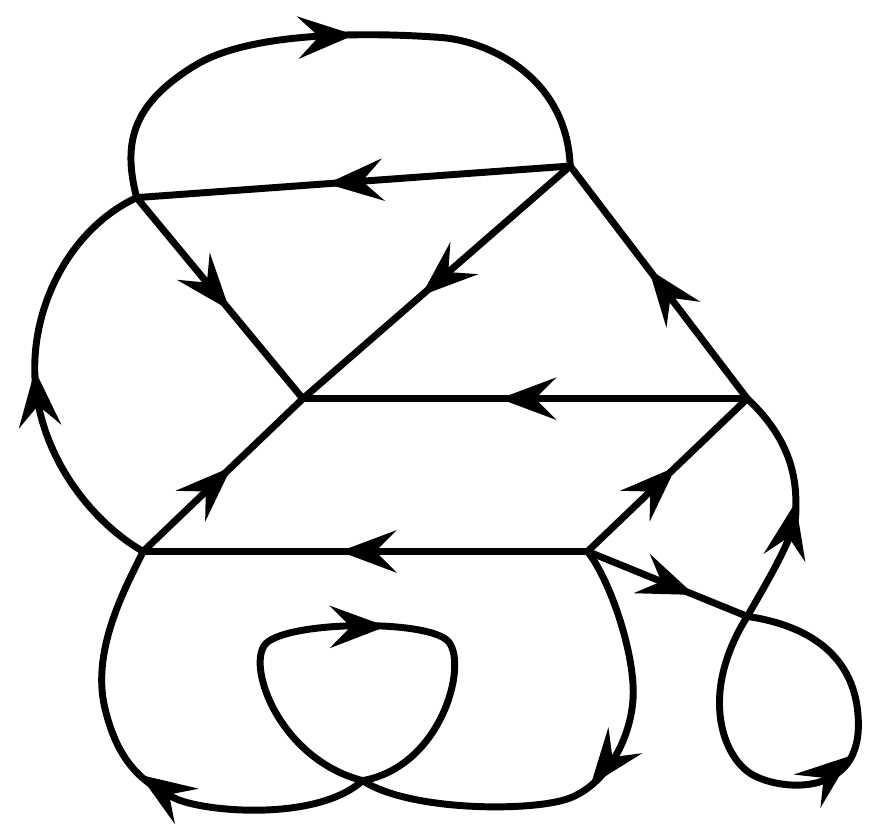}
\caption{}
\label{fig:planar_generic}
\end{subfigure}
\begin{subfigure}[t]{0.42\linewidth}
\centering
\includegraphics[width=0.7\linewidth]{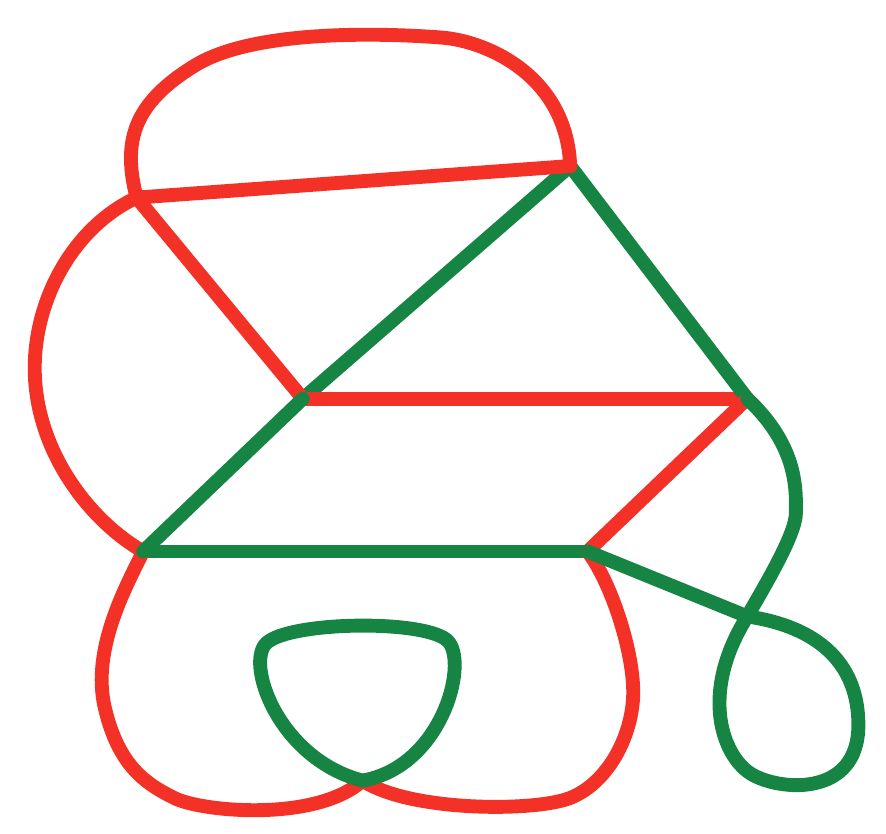}
\caption{}
\label{fig:planar_coloring}
\end{subfigure}
\caption{An even orientation $\tau'$ and its corresponding even coloring.}
\label{fig:planar_coloring-figurefull}
\end{figure}

For an arbitrary orientation $\tau'$ of $G$, we say an edge $e$ is \emph{green} if $\tau'(e) = \tau(e)$, and $e$ is \emph{red} otherwise.
Then for any  even orientation $\tau'$ of $G$,
we can show that this assignment of green-red colors 
is an \emph{even coloring} of $G$.
(See an illustration in \figref{fig:planar_coloring-figurefull}: \figref{fig:planar_generic} is an even orientation and \figref{fig:planar_coloring} is its corresponding even coloring.)
Every even orientation $\tau'$  gives an even coloring because:
\begin{itemize}
\item
Under this coloring, the canonical orientation $\tau$ receives the all-green coloring, which is an even coloring itself.
\item
For any even orientation $\tau'$, the local configuration of $\tau'$ at any vertex differs from the local configuration of $\tau$ at the same vertex on an \emph{even} number of edges. These edges receive the red color and 
the others receive green. Therefore, at each vertex the color assignment will be in one of the states shown in \figref{fig:even-colorings}.
\end{itemize}

\captionsetup[subfigure]{labelformat=empty}
\renewcommand{\thesubfigure}{-\arabic{subfigure}}
\begin{figure}[h!]
\centering
\begin{subfigure}[b]{0.12\linewidth}
\centering\includegraphics[width=\linewidth]{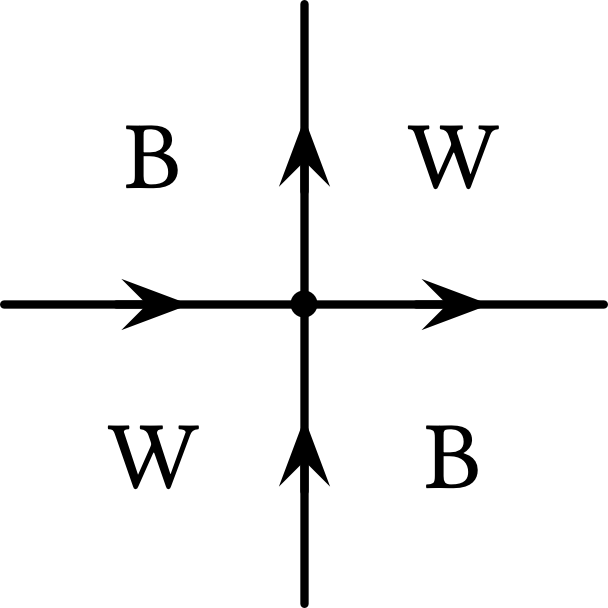}\caption{$1$}
\label{fig:blackwhite_1}
\end{subfigure}
\begin{subfigure}[b]{0.12\linewidth}
\centering\includegraphics[width=\linewidth]{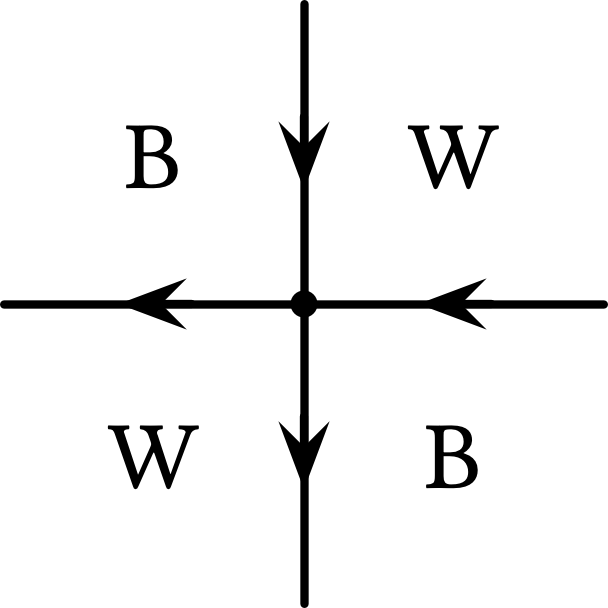}\caption{$2$}
\label{fig:blackwhite_2}
\end{subfigure}
\begin{subfigure}[b]{0.12\linewidth}
\centering\includegraphics[width=\linewidth]{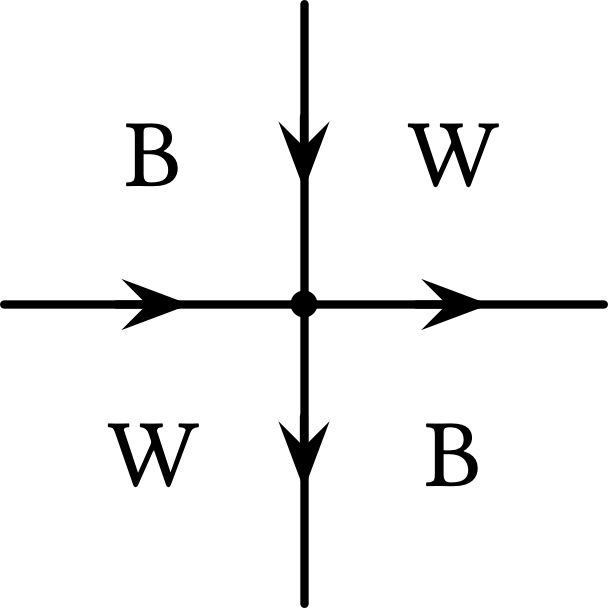}\caption{$3$}
\label{fig:blackwhite_3}
\end{subfigure}
\begin{subfigure}[b]{0.12\linewidth}
\centering\includegraphics[width=\linewidth]{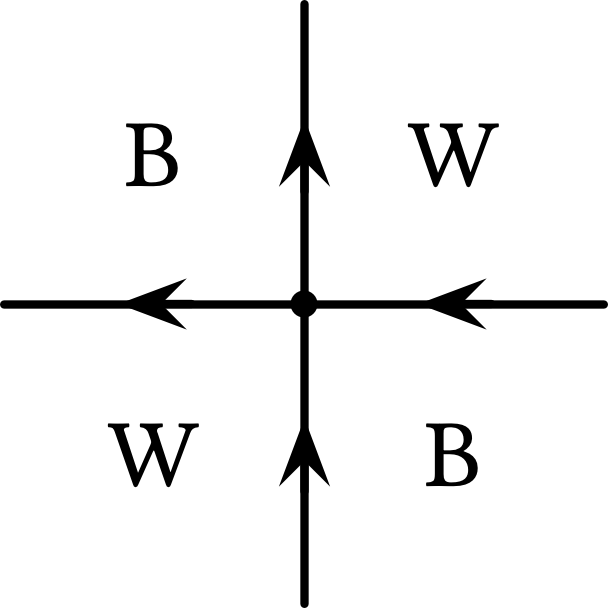}\caption{$4$}
\label{fig:blackwhite_4}
\end{subfigure}
\begin{subfigure}[b]{0.12\linewidth}
\centering\includegraphics[width=\linewidth]{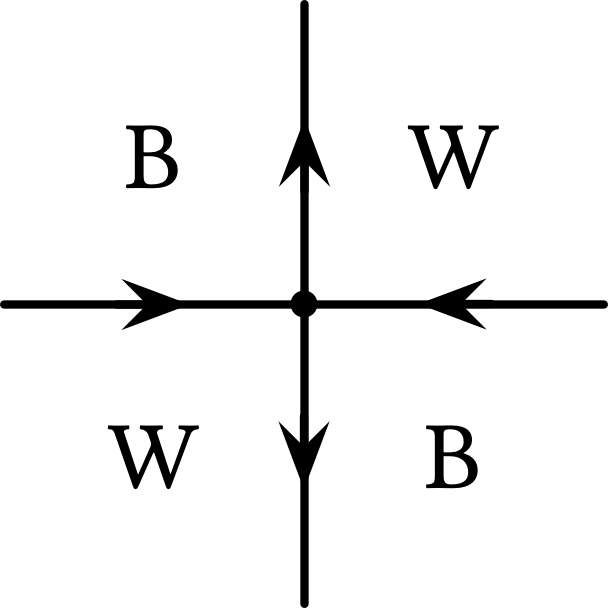}\caption{$5$}
\label{fig:blackwhite_5}
\end{subfigure}
\begin{subfigure}[b]{0.12\linewidth}
\centering\includegraphics[width=\linewidth]{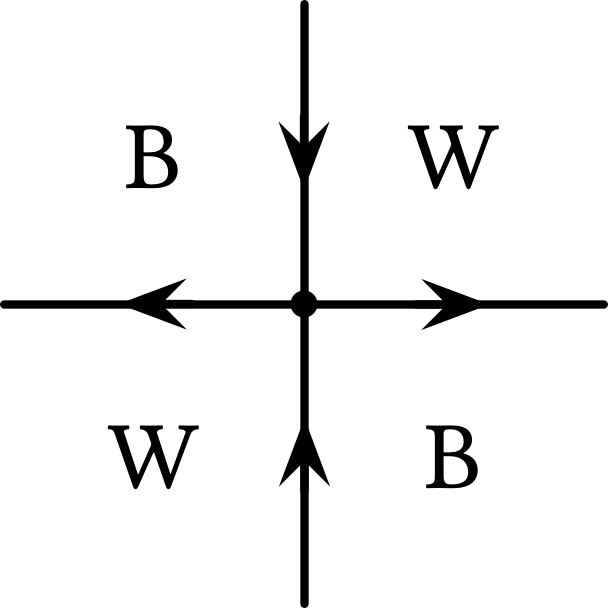}\caption{$6$}
\label{fig:blackwhite_6}
\end{subfigure}
\begin{subfigure}[b]{0.12\linewidth}
\centering\includegraphics[width=\linewidth]{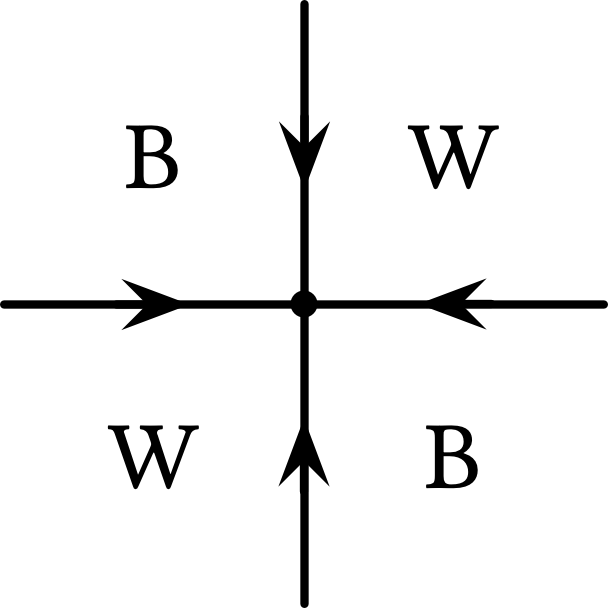}\caption{$7$}
\label{fig:blackwhite_7}
\end{subfigure}
\begin{subfigure}[b]{0.12\linewidth}
\centering\includegraphics[width=\linewidth]{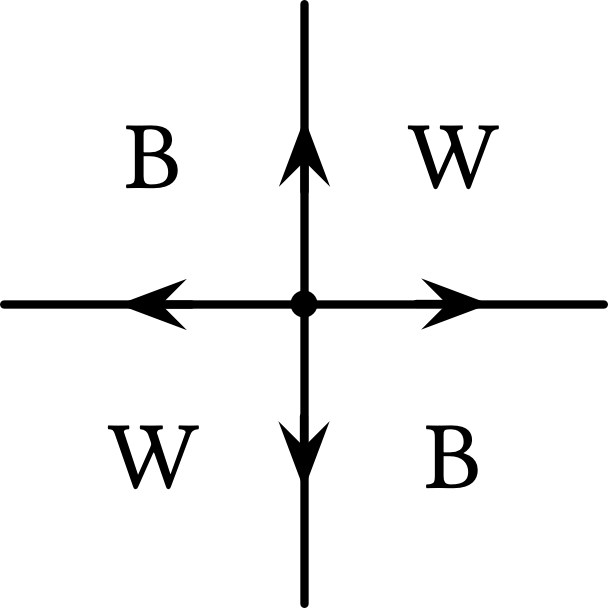}\caption{$8$}
\label{fig:blackwhite_8}
\end{subfigure}
\caption{}\label{fig:blackwhite}
\end{figure}

\begin{figure}[h!]
\centering
\begin{subfigure}[b]{0.12\linewidth}
\centering\includegraphics[width=\linewidth]{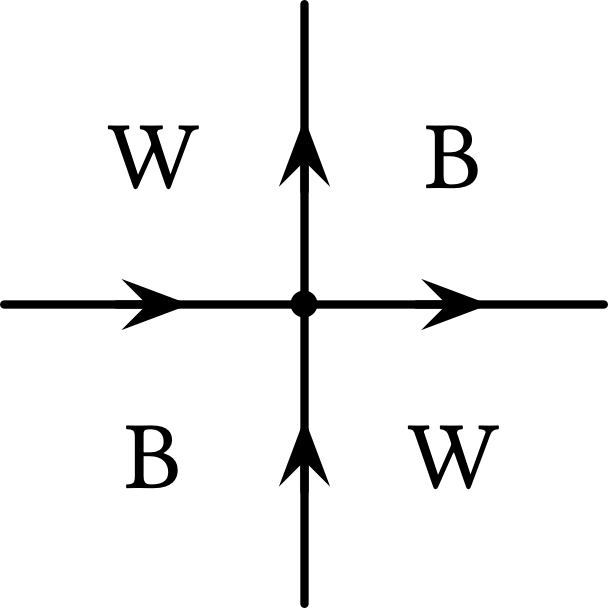}\caption{$1$}
\label{fig:whiteblack_1}
\end{subfigure}
\begin{subfigure}[b]{0.12\linewidth}
\centering\includegraphics[width=\linewidth]{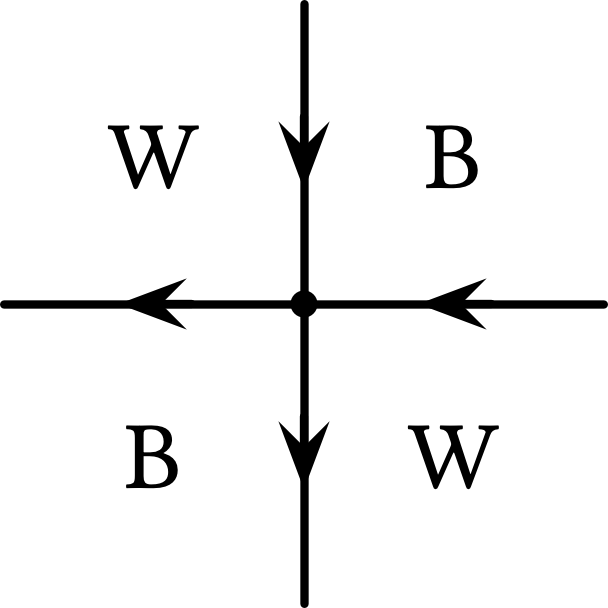}\caption{$2$}
\label{fig:whiteblack_2}
\end{subfigure}
\begin{subfigure}[b]{0.12\linewidth}
\centering\includegraphics[width=\linewidth]{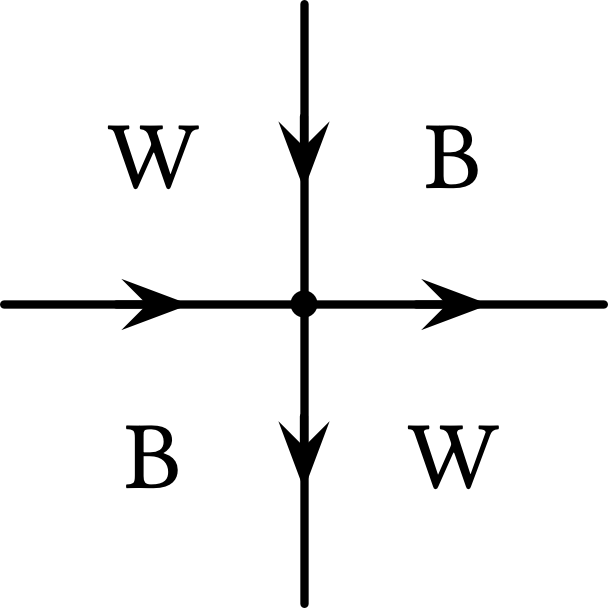}\caption{$3$}
\label{fig:whiteblack_3}
\end{subfigure}
\begin{subfigure}[b]{0.12\linewidth}
\centering\includegraphics[width=\linewidth]{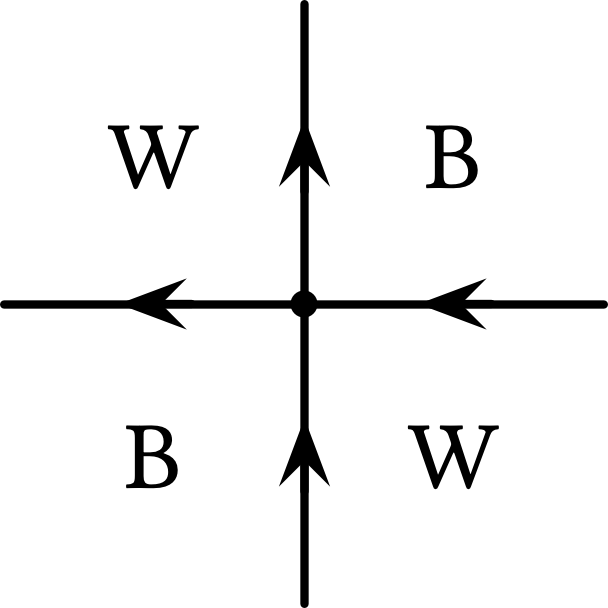}\caption{$4$}
\label{fig:whiteblack_4}
\end{subfigure}
\begin{subfigure}[b]{0.12\linewidth}
\centering\includegraphics[width=\linewidth]{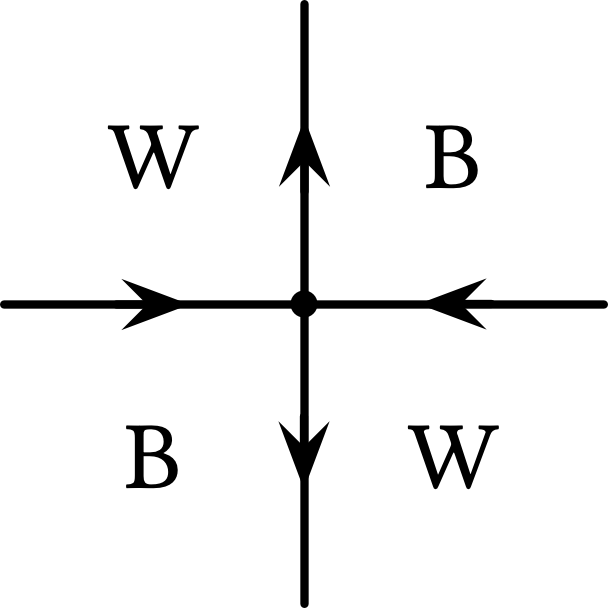}\caption{$5$}
\label{fig:whiteblack_5}
\end{subfigure}
\begin{subfigure}[b]{0.12\linewidth}
\centering\includegraphics[width=\linewidth]{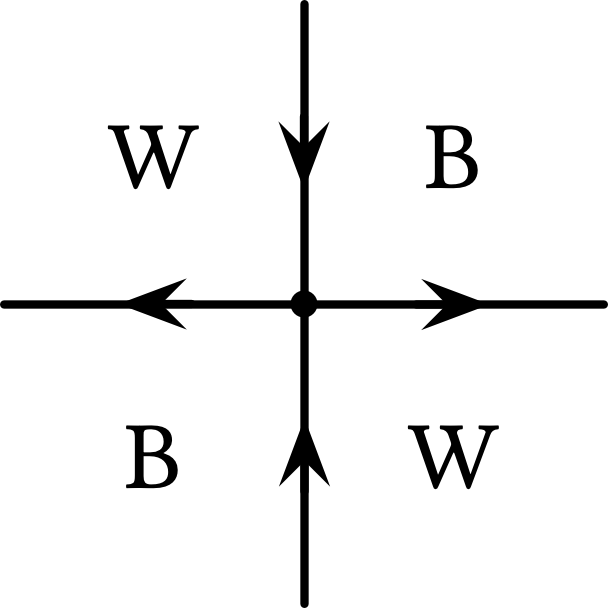}\caption{$6$}
\label{fig:whiteblack_6}
\end{subfigure}
\begin{subfigure}[b]{0.12\linewidth}
\centering\includegraphics[width=\linewidth]{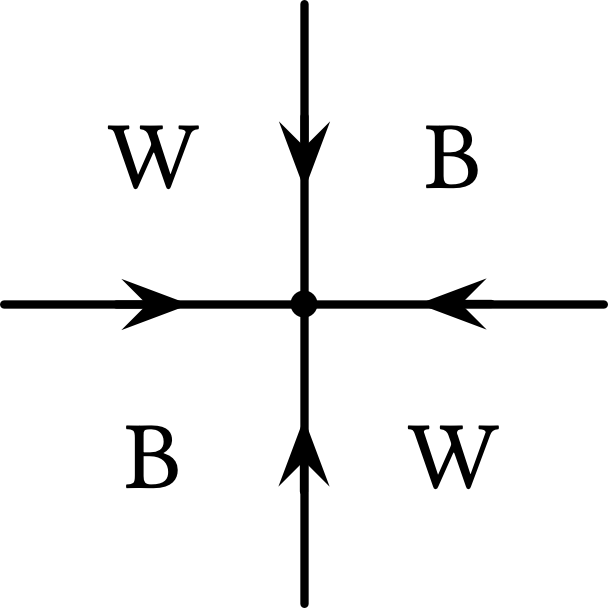}\caption{$7$}
\label{fig:whiteblack_7}
\end{subfigure}
\begin{subfigure}[b]{0.12\linewidth}
\centering\includegraphics[width=\linewidth]{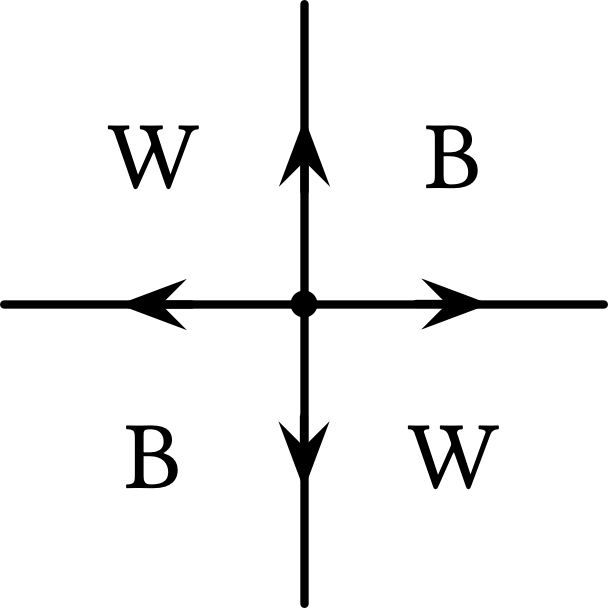}\caption{$8$}
\label{fig:whiteblack_8}
\end{subfigure}
\caption{}\label{fig:whiteblack}
\end{figure}

We claim that
this color assignment on the edges of $G$ gives a 
bijection $M_\textup{\textsc{Planar}}$ from $S_\textsc{8V}$ to $S_\textsc{EC}$.
Given the black and white coloring of the faces of $G$,
there are two types of vertices in $G$, either the one in \figref{fig:blackwhite} or the other one in \figref{fig:whiteblack}.
The correspondence of local configurations from \figref{fig:blackwhite} to \figref{fig:even-colorings} is $(1, 2, 3, 4, 5, 6, 7, 8) \rightarrow (3, 4, 1, 2, 7, 8, 5, 6)$; the correspondence of local configurations from \figref{fig:whiteblack} to \figref{fig:even-colorings} is $(1, 2, 3, 4, 5, 6, 7, 8) \rightarrow (4, 3, 2, 1, 8, 7, 6, 5)$.
Consequently, $M_\textup{\textsc{Planar}}$ induces a one-to-one 
correspondence between $\mathcal{O}_{\bf e}(G)$ and $\mathcal{C}_{\bf e}(G)$.

Although there are two types of maps, both of them map $\{1, 2\}$ to $\{3, 4\}$, $\{3, 4\}$ to $\{1, 2\}$, $\{5, 6\}$ to $\{7, 8\}$, and $\{7, 8\}$ to $\{5, 6\}$.
In terms of the weights of local configurations: the arrow reversal symmetry of the eight-vertex model induces an equivalence relation $\{1,2\},  \{3,4\}, \{5,6\}, \{7,8\}$ in \figref{fig:orientations}; the color reversal symmetry of the even-coloring model induces an equivalence relation $\{1,2\},  \{3,4\}, \{5,6\}, \{7,8\}$ in \figref{fig:even-colorings}. 
Thus there is a uniform and consistent way to assign a mapping
of parameters.
One can see that $M_\textup{\textsc{Planar}}$ is weight-preserving if the even-coloring model has parameter setting $(w, x, y, z) = (b, a, d, c)$.
 It follows that
\[Z_{\textup{8V}}(G; a, b, c, d) = Z_{\textup{EC}}(G; b, a, d, c).\]
\end{proof}

Now that we have set up multiple equations (\lemref{lem:pm-hard_holant}, \lemref{lem:holo_trans}, and \lemref{lem:planar}) between $Z_{\textup{8V}}$ and $Z_{\textup{EC}}$ under different mappings in terms of the parameter settings, we can combine them and obtain equations between $Z_{\textup{8V}}$ under different parameter settings.

Before that, we make the following observation.
Since in any even orientation of a 4-regular graph $G$, the number of sinks (\figref{fig:orientations_7}) must be equal to the number of sources (\figref{fig:orientations_8}) and thus their sum is always even, we know that in the eight-vertex model under parameter setting $(a, b, c, d)$, the weight of a state is unchanged if we flip $d$ to $-d$. Therefore, we have
\begin{equation}\label{eqn:d-flip}
Z_\textup{8V}(G; a, b, c, d) = Z_\textup{8V}(G; a, b, c, -d).
\end{equation}
(In particular, for non-negative $a, b, c, d$, even though $-d$ makes
an appearance on the right-hand-side, this equation says the value
$Z_\textup{8V}(G; a, b, c, -d) \ge 0$.)
Obviously the approximation complexity for computing $Z_\textup{8V}(a, b, c, -d)$ is the same as that for $Z_\textup{8V}(a, b, c, d)$.

\begin{notation}
Given a set of 4-tuples $S$, let $N_d(S) =  \{(a, b, c, -d) \ |\ (a, b, c, d) \in S\}$.
\end{notation}

\begin{notation}
Given two $4 \times 4$ invertible matrices $M_1$ and $M_2$, denote by $\left<M_1, M_2\right>$ the group of matrices generated by $M_1$ and $M_2$, where the group operation is matrix multiplication.
\end{notation}

\begin{theorem}\label{thm:planar}
Let $G$ be a 4-regular plane graph
and let $M_{Z}^\textup{\textsc{Pl}} = 
\frac{1}{2} \left[\begin{smallmatrix} 1 & -1 & 1 & -1 \\ -1 & 1 & 1 & -1 \\ 1 & 1 & 1 & 1 \\ 1 & 1 & -1 & -1 \end{smallmatrix}\right]$,
$M_{HZ}^\textup{\textsc{Pl}} = 
\frac{1}{2} \left[\begin{smallmatrix} 1 & -1 & 1 & 1 \\ -1 & 1 & 1 & 1 \\ 1 & 1 & 1 & -1 \\ 1 & 1 & -1 & 1 \end{smallmatrix}\right]$.
Then for any $M \in \left<M_{Z}^\textup{\textsc{Pl}}, M_{HZ}^\textup{\textsc{Pl}}\right>$,
$Z_\textup{8V}(G; a, b, c, d) = Z_\textup{8V}(G; a', b', c', d')$ where
$\left[\begin{smallmatrix} a'\\ b'\\ c'\\ d' \end{smallmatrix}\right]
= M
\left[\begin{smallmatrix} a\\ b\\ c\\ d \end{smallmatrix}\right]$.
\end{theorem}

\begin{remark}
One can check that $\left<M_{Z}^\textup{\textsc{Pl}}, M_{HZ}^\textup{\textsc{Pl}}\right>$ is isomorphic to the symmetry group $S_3$ and the group elements are shown in \tabref{tab:planar}. 
\end{remark}

\begin{proof}
First we prove the theorem for $M_{Z}^\textup{\textsc{Pl}}$. 
 From \lemref{lem:pm-hard_holant}, we know that
$Z_{\textup{8V}}(G; a, b, c, d) = Z_{\textup{EC}}(G; w, x, y, z)$ where
$\left[\begin{smallmatrix} w\\ x\\ y\\ z \end{smallmatrix}\right]
= \frac{1}{2} \left[\begin{smallmatrix} -1 & 1 & 1 & -1 \\ 1 & -1 & 1 & -1 \\ 1 & 1 & -1 & -1 \\ 1 & 1 & 1 & 1 \end{smallmatrix}\right]
\left[\begin{smallmatrix} a\\ b\\ c\\ d \end{smallmatrix}\right]$.
From \lemref{lem:planar}, we know that
$Z_{\textup{EC}}(G; w, x, y, z) = Z_{\textup{8V}}(G; a', b', c', d')$
where
$\left[\begin{smallmatrix} a'\\ b'\\ c'\\ d' \end{smallmatrix}\right] = \left[\begin{smallmatrix} 0 & 1 & 0 & 0  \\ 1 & 0 & 0 & 0 \\ 0 & 0 & 0 & 1 \\ 0 & 0 & 1 & 0 \end{smallmatrix}\right]
\left[\begin{smallmatrix} w\\ x\\ y\\ z \end{smallmatrix}\right]$.
Therefore, we have
$Z_\textup{8V}(G; a, b, c, d) = Z_\textup{8V}(G; a', b', c', d')$ where
$\left[\begin{smallmatrix} a'\\ b'\\ c'\\ d' \end{smallmatrix}\right]
= 
\left[\begin{smallmatrix} 0 & 1 & 0 & 0  \\ 1 & 0 & 0 & 0 \\ 0 & 0 & 0 & 1 \\ 0 & 0 & 1 & 0 \end{smallmatrix}\right] \cdot
\frac{1}{2} \left[\begin{smallmatrix} -1 & 1 & 1 & -1 \\ 1 & -1 & 1 & -1 \\ 1 & 1 & -1 & -1 \\ 1 & 1 & 1 & 1 \end{smallmatrix}\right]
\left[\begin{smallmatrix} a\\ b\\ c\\ d \end{smallmatrix}\right]
= M_{Z}^\textup{\textsc{Pl}}
\left[\begin{smallmatrix} a\\ b\\ c\\ d \end{smallmatrix}\right]$.

The proof for $M_{HZ}^\textup{\textsc{Pl}}$ is similar. Instead of combining \lemref{lem:pm-hard_holant} and \lemref{lem:planar}, we simply need to combine \lemref{lem:holo_trans} and \lemref{lem:planar} and notice that $M_{HZ}^\textup{\textsc{Pl}} = 
\left[\begin{smallmatrix} 0 & 1 & 0 & 0  \\ 1 & 0 & 0 & 0 \\ 0 & 0 & 0 & 1 \\ 0 & 0 & 1 & 0 \end{smallmatrix}\right] \cdot
\frac{1}{2} \left[\begin{smallmatrix} -1 & 1 & 1 & 1 \\ 1 & -1 & 1 & 1 \\ 1 & 1 & -1 & 1 \\ 1 & 1 & 1 & -1 \end{smallmatrix}\right]$.

Since the theorem is proved for the two invertible matrices $M_{Z}^\textup{\textsc{Pl}}$ and $M_{HZ}^\textup{\textsc{Pl}}$, it is also true for the group of matrices generated by these two matrices using their inverse and matrix multiplication.
\end{proof}

\begin{notation}
In order to state the results in this section and the next section, we adopt the following notations assuming $a, b, c, d \ge 0$.
\begin{itemize}
\item
$\mathcal{A} = \{(a,b,c,d) \; | \; a \le b+c+d\}$, $\mathcal{B} = \{(a,b,c,d) \; | \; b \le a+c+d\}$, $\mathcal{C} = \{(a,b,c,d) \; | \; c \le a+b+d\}$, $\mathcal{D} = \{(a,b,c,d) \; | \; d \le a+b+c\}$;
\item
$\mathcal{AD} = \{(a,b,c,d) \; | \; a+d \le b+c\}$, $\mathcal{BD} = \{(a,b,c,d) \; | \; b+d \le a+c\}$, $\mathcal{CD} = \{(a,b,c,d) \; | \; c+d \le a+b\}$.
\end{itemize}
\end{notation}
\begin{remark}
$\mathcal{AD} \subset \mathcal{A} \bigcap \mathcal{D}$, $\mathcal{BD} \subset \mathcal{B} \bigcap \mathcal{D}$, $\mathcal{CD} \subset \mathcal{C} \bigcap \mathcal{D}.$
$\mathcal{X} = \mathcal{A} \bigcap \mathcal{B} \bigcap \mathcal{C} \bigcap \mathcal{D}$.
$\mathcal{Y} = \mathcal{AD} \bigcap \mathcal{BD} \bigcap \mathcal{CD}$.
In addition, we abuse the notation and use 
$\overline{\mathcal{C}} = \{(a,b,c,d) \; | \; c \ge a+b+d\}$, $\overline{\mathcal{D}} = \{(a,b,c,d) \; | \; d \ge a+b+c\}$, $\overline{\mathcal{AD}} = \{(a,b,c,d) \; | \; a+d \ge b+c\}$, $\overline{\mathcal{BD}} = \{(a,b,c,d) \; | \; b+d \ge a+c\}$, and $\overline{\mathcal{CD}} = \{(a,b,c,d) \; | \; c+d \ge a+b\}$ in \tabref{tab:planar} and \tabref{tab:bipartite}.
\end{remark}

\begin{table}[htpb]
\centering
\caption{Elements of $\left<M_{Z}^\textup{\textsc{Pl}}, M_{HZ}^\textup{\textsc{Pl}}\right>$ and preimages of $\mathcal{Y} = \mathcal{AD} \bigcap \mathcal{BD} \bigcap \mathcal{CD}$ under corresponding maps.
A substantial subregion of each preimage admits FPRAS on planar 4-regular graphs.
The last column lists the approximation complexity of the eight-vertex model on general 4-regular graphs in the corresponding preimage regions.}
\label{tab:planar}
\ra{2.2}
\begin{tabular}{@{}l@{\phantom{abcde}}l@{\phantom{abcde}}l@{\phantom{abcde}}l@{}}
\toprule
Element & Matrix & Preimage & Approximation \\
\midrule
$I_4$
&
$\left[\begin{smallmatrix} 1 & 0 & 0 & 0 \\ 0 & 1 & 0 & 0 \\ 0 & 0 & 1 & 0 \\ 0 & 0 & 0 & 1 \end{smallmatrix}\right]$
&
$\mathcal{AD} \bigcap \mathcal{BD} \bigcap \mathcal{CD}$
&
FPRAS in $\mathcal{Z}$
\\
$M_{Z}^\textup{\textsc{Pl}}$
&
$\frac{1}{2} \left[\begin{smallmatrix} 1 & -1 & 1 & -1 \\ -1 & 1 & 1 & -1 \\ 1 & 1 & 1 & 1 \\ 1 & 1 & -1 & -1 \end{smallmatrix}\right]$
&
$N_d\left(\mathcal{AD} \bigcap \mathcal{BD} \bigcap \overline{\mathcal{CD}} \bigcap \mathcal{C}\right)$
&
\#PM-hard
\\
$\left(M_{Z}^\textup{\textsc{Pl}}\right)^2$
&
$\frac{1}{2} \left[\begin{smallmatrix} 1 & -1 & 1 & 1 \\ -1 & 1 & 1 & 1 \\ 1 & 1 & 1 & -1 \\ -1 & -1 & 1 & -1 \end{smallmatrix}\right]$
&
$\overline{\mathcal{C}}$
&
NP-hard
\\
$M_{HZ}^\textup{\textsc{Pl}}$ 
&
$\frac{1}{2} \left[\begin{smallmatrix} 1 & -1 & 1 & 1 \\ -1 & 1 & 1 & 1 \\ 1 & 1 & 1 & -1 \\ 1 & 1 & -1 & 1 \end{smallmatrix}\right]$
&
$\mathcal{AD} \bigcap \mathcal{BD} \bigcap \overline{\mathcal{CD}} \bigcap \mathcal{C}$
&
\#PM-hard
\\
$M_{Z}^\textup{\textsc{Pl}} M_{HZ}^\textup{\textsc{Pl}}$
&
$\frac{1}{2} \left[\begin{smallmatrix} 1 & -1 & 1 & -1 \\ -1 & 1 & 1 & -1 \\ 1 & 1 & 1 & 1 \\ -1 & -1 & 1 & 1 \end{smallmatrix}\right]$
&
$N_d\left(\overline{\mathcal{C}}\right)$
&
NP-hard
\\
$\left(M_{Z}^\textup{\textsc{Pl}}\right)^2 M_{HZ}^\textup{\textsc{Pl}}$
&
$\left[\begin{smallmatrix} 1 & 0 & 0 & 0 \\ 0 & 1 & 0 & 0 \\ 0 & 0 & 1 & 0 \\ 0 & 0 & 0 & -1 \end{smallmatrix}\right]$
&
$N_d\left(\mathcal{AD} \bigcap \mathcal{BD} \bigcap \mathcal{CD}\right)$
&
FPRAS in $N_d\left(\mathcal{Z}\right)$
\\
\bottomrule
\end{tabular}
\end{table}

\begin{corollary}\label{cor:planar}
Let $G$ be a 4-regular plane graph
and let $M_{Z}^\textup{\textsc{Pl}}$ and $M_{HZ}^\textup{\textsc{Pl}}$ be defined as in \thmref{thm:planar}.
Then for any $M \in \left<M_{Z}^\textup{\textsc{Pl}}, M_{HZ}^\textup{\textsc{Pl}}\right>$, there is an FPRAS for $Z_{\textup{8V}}(G; a, b, c, d)$ if $M \left[\begin{smallmatrix} a\\ b\\ c\\ d \end{smallmatrix}\right] \in \mathcal{Y} \bigcap \mathcal{Z}$.
\end{corollary}

Thus we know that for any $M \in \left<M_{Z}^\textup{\textsc{Pl}}, M_{HZ}^\textup{\textsc{Pl}}\right>$, there is an FPRAS for $Z_{\textup{8V}}(G; a, b, c, d)$ for $(a, b, c, d)$ in a subregion of $M^{-1}(\mathcal{Y})$.
Note that $\mathcal{AD} \bigcap \mathcal{BD} \bigcap \overline{\mathcal{CD}} \bigcap \mathcal{C} \subset \mathcal{X} \bigcap \overline{\mathcal{Y}}$ and $\overline{\mathcal{C}} \subset \overline{\mathcal{X}}$.
With the help of \tabref{tab:planar}, one can see that \thmref{thm:main_planar} is implied by \corref{cor:planar}\footnote{In fact, to prove \thmref{thm:main_planar} we need $\left(M_{HZ}^\textup{\textsc{Pl}}\right)^{-1} \left( \mathcal{Y} \bigcap \mathcal{Z} \right) \subset \mathcal{AD} \bigcap \mathcal{BD} \bigcap \overline{\mathcal{CD}} \bigcap \mathcal{C} \bigcap \overline{\mathcal{Z}}$ and one can check that this is true.}.

\bigskip

\section{Bipartite graphs}\label{sec:bipartite}
\begin{lemma}\label{lem:bipartite}
Let $G$ be a bipartite graph. Let $G'$ denote its edge-vertex incidence graph.
Suppose $f$ satisfies arrow reversal symmetry.
Then $\textup{Holant}\left(G'; \neq_2 |\ f\right) = \textup{Holant}\left(G'; =_2 |\ f\right) = \textup{Holant}\left(G; f\right)$.
In particular, if $G$ is 4-regular, then
$Z_{\textup{8V}}(G; a, b, c, d) = Z_{\textup{EC}}(G; a, b, c, d)$.
\end{lemma}
\begin{proof}
For any bipartite graph $G = (L, R, E)$,
there is a \emph{canonical orientation} $\tau$ which is to orient all the edges from $R$ to $L$.
Let $G$ be a 4-regular bipartite graph.
Observe that in $\tau$ every vertex in $L$ has local configuration \figref{fig:orientations_7} and every vertex in $R$ has local configuration \figref{fig:orientations_8}.

For an arbitrary orientation $\tau'$ of $G$, we say an edge $e$ is \emph{green} if $\tau'(e) = \tau(e)$, and $e$ is \emph{red} otherwise.
This coloring assignment on the edges of $G$ gives
 a 
bijection $M_\textup{\textsc{Bipartite}}$ from $S_\textsc{8V}$ to $S_\textsc{EC}$.
For the vertices in $L$, this can be seen in the entry-wise correspondence
$(1, 2, 3, 4, 5, 6, 7, 8) \rightarrow (1, 2, 3, 4, 5, 6, 7, 8)$ 
 from \figref{fig:orientations} to \figref{fig:even-colorings}; for the vertices in $R$, the correspondence of local configurations from \figref{fig:orientations} to \figref{fig:even-colorings} is $(1, 2, 3, 4, 5, 6, 7, 8) \rightarrow (2, 1, 4, 3, 6, 5, 8, 7)$.
Consequently
$M_\textup{\textsc{Bipartite}}$ defines a one-to-one correspondence
 between $\mathcal{O}_{\bf e}(G)$ and $\mathcal{C}_{\bf e}(G)$.
Again because both maps respect the same equivalence relation
induced by the arrow reversal symmetry,
this one-to-one correspondence $M_\textup{\textsc{Bipartite}}$ is weight-preserving if the even-coloring model has the parameter setting $(w, x, y, z) = (a, b, c, d)$. It follows that
\[Z_{\textup{8V}}(G; a, b, c, d) = Z_{\textup{EC}}(G; a, b, c, d).\]

The idea of the above mapping can be easily extended to general (not necessarily 4-regular) graphs.
For a bipartite graph $G = (L, R, E)$ and its edge-vertex incidence graph $G' = (V_E, L \cup R, E')$, every vertex $v_e \in V_E$ has degree 2 and connects a vertex $l \in L$ with a vertex $r \in R$.
To see a one-to-one weight-preserving mapping from valid configurations in $\textup{Holant}\left(G'; \neq_2 |\ f\right)$ to valid configurations in $\textup{Holant}\left(G'; =_2 |\ f\right)$, one simply flips the assignment on every edge $\{v_e, r\}$ such that $v_e \in V_E$ and $r \in R$.
\end{proof}

For any bipartite regular graph $G = (L, R, E)$, we know that $|L| = |R|$ and hence the total number of vertices is always even.
In the eight-vertex model under parameter setting $(a, b, c, d)$, the weight of a state is unchanged if we flip the sign of the weight on every vertex.
Therefore, for bipartite 4-regular graphs, in addition to (\ref{eqn:d-flip}), we also have
\begin{equation}\label{eqn:all-flip}
Z_\textup{8V}(G; a, b, c, d) = Z_\textup{8V}(G; -a, -b, -c, -d).
\end{equation}

\begin{notation}
Given a set of 4-tuples $S$, let $N(S) =  \{(-a, -b, -c, -d) \ |\ (a, b, c, d) \in S\}$.
\end{notation}

\begin{theorem}\label{thm:bipartite}
Let $G$ be a 4-regular bipartite graph
and let $M_{Z}^\textup{\textsc{Bi}} = 
\frac{1}{2} \left[\begin{smallmatrix} -1 & 1 & 1 & -1 \\ 1 & -1 & 1 & -1 \\ 1 & 1 & -1 & -1 \\ 1 & 1 & 1 & 1 \end{smallmatrix}\right]$,
$M_{HZ}^\textup{\textsc{Bi}} = 
\frac{1}{2} \left[\begin{smallmatrix} -1 & 1 & 1 & 1 \\ 1 & -1 & 1 & 1 \\ 1 & 1 & -1 & 1 \\ 1 & 1 & 1 & -1 \end{smallmatrix}\right]$.
Then for any $M \in \left<M_{Z}^\textup{\textsc{Bi}}, M_{HZ}^\textup{\textsc{Bi}}\right>$,
$Z_{\textup{8V}}(G; a, b, c, d) = Z_{\textup{8V}}(G; a', b', c', d')$ where
$\left[\begin{smallmatrix} a'\\ b'\\ c'\\ d' \end{smallmatrix}\right]
= M
\left[\begin{smallmatrix} a\\ b\\ c\\ d \end{smallmatrix}\right]$.
\end{theorem}

\begin{remark}
One can check that $\left<M_{Z}^\textup{\textsc{Bi}}, M_{HZ}^\textup{\textsc{Bi
}}\right>$ is isomorphic to the dihedral group $D_6$ and the group elements are shown in \tabref{tab:bipartite}. 
\end{remark}

\begin{proof}
The proof is similar to that of \thmref{thm:planar}.
Instead of combining the holographic maps in \lemref{lem:pm-hard_holant}, \lemref{lem:holo_trans} with the planar map in \lemref{lem:planar}, we need to combine them with the bipartite map in \lemref{lem:bipartite}.
\end{proof}

\begin{table}[htpb]
\centering
\caption{Elements of $\left<M_{Z}^\textup{\textsc{Bi}}, M_{HZ}^\textup{\textsc{Bi}}\right>$ and preimages of $\mathcal{Y} = \mathcal{AD} \bigcap \mathcal{BD} \bigcap \mathcal{CD}$ under corresponding maps.
A substantial subregion of each preimage admits FPRAS on bipartite 4-regular graphs.
The last column lists the approximation complexity of the eight-vertex model on general 4-regular graphs in the corresponding preimage regions.}
\label{tab:bipartite}
\ra{2.2}
\begin{tabular}{@{}l@{\phantom{abcde}}l@{\phantom{abcde}}l@{\phantom{abcde}}l@{}}
\toprule
Element & Matrix & Preimage & Approximation \\
\midrule
$I_4$
&
$\left[\begin{smallmatrix} 1 & 0 & 0 & 0 \\ 0 & 1 & 0 & 0 \\ 0 & 0 & 1 & 0 \\ 0 & 0 & 0 & 1 \end{smallmatrix}\right]$
&
$\mathcal{AD} \bigcap \mathcal{BD} \bigcap \mathcal{CD}$
&
FPRAS in $\mathcal{Z}$
\\
$M_{Z}^\textup{\textsc{Bi}}$
&
$\frac{1}{2} \left[\begin{smallmatrix} -1 & 1 & 1 & -1 \\ 1 & -1 & 1 & -1 \\ 1 & 1 & -1 & -1 \\ 1 & 1 & 1 & 1 \end{smallmatrix}\right]$
&
$N_d\left(\overline{\mathcal{AD}} \bigcap \overline{\mathcal{BD}} \bigcap \overline{\mathcal{CD}} \bigcap \mathcal{D}\right)$
&
\#PM-hard
\\
$\left(M_{Z}^\textup{\textsc{Bi}}\right)^2$
&
$\frac{1}{2} \left[\begin{smallmatrix} 1 & -1 & -1 & -1 \\ -1 & 1 & -1 & -1 \\ -1 & -1 & 1 & -1 \\ 1 & 1 & 1 & -1 \end{smallmatrix}\right]$
&
$N\left(\overline{\mathcal{D}}\right)$
&
NP-hard
\\
$\left(M_{Z}^\textup{\textsc{Bi}}\right)^3$
&
$-I_4$
&
$N\left(\mathcal{AD} \bigcap \mathcal{BD} \bigcap \mathcal{CD}\right)$
&
FPRAS in $N\left(\mathcal{Z}\right)$
\\
$\left(M_{Z}^\textup{\textsc{Bi}}\right)^4$
&
$-M_{Z}^\textup{\textsc{Bi}}$
&
$N\left(N_d\left(\overline{\mathcal{AD}} \bigcap \overline{\mathcal{BD}} \bigcap \overline{\mathcal{CD}} \bigcap \mathcal{D}\right)\right)$
&
\#PM-hard
\\
$\left(M_{Z}^\textup{\textsc{Bi}}\right)^5$
&
$-\left(M_{Z}^\textup{\textsc{Bi}}\right)^2$
&
$\overline{\mathcal{D}}$
&
NP-hard
\\
$M_{HZ}^\textup{\textsc{Bi}}$ 
&
$\frac{1}{2} \left[\begin{smallmatrix} -1 & 1 & 1 & 1 \\ 1 & -1 & 1 & 1 \\ 1 & 1 & -1 & 1 \\ 1 & 1 & 1 & -1 \end{smallmatrix}\right]$
&
$\overline{\mathcal{AD}} \bigcap \overline{\mathcal{BD}} \bigcap \overline{\mathcal{CD}} \bigcap \mathcal{D}$
&
\#PM-hard
\\
$M_{Z}^\textup{\textsc{Bi}} M_{HZ}^\textup{\textsc{Bi}}$
&
$\frac{1}{2} \left[\begin{smallmatrix} 1 & -1 & -1 & 1 \\ -1 & 1 & -1 & 1 \\ -1 & -1 & 1 & 1 \\ 1 & 1 & 1 & 1 \end{smallmatrix}\right]$
&
$N\left(N_d\left(\overline{\mathcal{D}}\right)\right)$
&
NP-hard
\\
$\left(M_{Z}^\textup{\textsc{Bi}}\right)^2 M_{HZ}^\textup{\textsc{Bi}}$
&
$\left[\begin{smallmatrix} -1 & 0 & 0 & 0 \\ 0 & -1 & 0 & 0 \\ 0 & 0 & -1 & 0 \\ 0 & 0 & 0 & 1 \end{smallmatrix}\right]$
&
$N\left(N_d\left(\mathcal{AD} \bigcap \mathcal{BD} \bigcap \mathcal{CD}\right)\right)$
&
FPRAS in $N\left(N_d\left(\mathcal{Z}\right)\right)$
\\
$\left(M_{Z}^\textup{\textsc{Bi}}\right)^3 M_{HZ}^\textup{\textsc{Bi}}$ 
&
$-M_{HZ}^\textup{\textsc{Bi}}$
&
$N\left(\overline{\mathcal{AD}} \bigcap \overline{\mathcal{BD}} \bigcap \overline{\mathcal{CD}} \bigcap \mathcal{D}\right)$
&
\#PM-hard
\\
$\left(M_{Z}^\textup{\textsc{Bi}}\right)^4 M_{HZ}^\textup{\textsc{Bi}}$
&
$-M_{Z}^\textup{\textsc{Bi}} M_{HZ}^\textup{\textsc{Bi}}$
&
$N_d\left(\overline{\mathcal{D}}\right)$
&
NP-hard
\\
$\left(M_{Z}^\textup{\textsc{Bi}}\right)^5 M_{HZ}^\textup{\textsc{Bi}}$
&
$-\left(M_{Z}^\textup{\textsc{Bi}}\right)^2 M_{HZ}^\textup{\textsc{Bi}}$
&
$N_d\left(\mathcal{AD} \bigcap \mathcal{BD} \bigcap \mathcal{CD}\right)$
&
FPRAS in $N_d\left(\mathcal{Z}\right)$
\\
\bottomrule
\end{tabular}
\end{table}

\begin{corollary}\label{cor:bipartite}
Let $G$ be a 4-regular bipartite graph
and let $M_{Z}^\textup{\textsc{Bi}}$ and $M_{HZ}^\textup{\textsc{Bi}}$ be defined as in \thmref{thm:bipartite}.
Then for any $M \in \left<M_{Z}^\textup{\textsc{Bi}}, M_{HZ}^\textup{\textsc{Bi}}\right>$, there is an FPRAS for $Z_{\textup{8V}}(G; a, b, c, d)$ if $M \left[\begin{smallmatrix} a\\ b\\ c\\ d \end{smallmatrix}\right] \in \mathcal{Y} \bigcap \mathcal{Z}$.
\end{corollary}

Thus we know that for any $M \in \left<M_{Z}^\textup{\textsc{Bi}}, M_{HZ}^\textup{\textsc{Bi}}\right>$, there is an FPRAS for $Z_{\textup{8V}}(G; a, b, c, d)$ for $(a, b, c, d)$ in a subregion of $M^{-1}(\mathcal{Y})$.
Note that $\overline{\mathcal{AD}} \bigcap \overline{\mathcal{BD}} \bigcap \overline{\mathcal{CD}} \bigcap \mathcal{D} \subset \mathcal{X} \bigcap \overline{\mathcal{Y}}$ and $\overline{\mathcal{D}} \subset \overline{\mathcal{X}}$.
With the help of \tabref{tab:bipartite}, one can see that \thmref{thm:main_bipartite} is implied by \corref{cor:bipartite}.

\bigskip

\section{Concluding remarks}{
All the FPRAS results obtained in this paper come from the algorithm for $\mathcal{AD} \bigcap \mathcal{BD} \bigcap \mathcal{CD} \bigcap \mathcal{Z}$. It is open if there exists an FPRAS/FPTAS for all $(a, b, c, d) \in \mathcal{AD} \bigcap \mathcal{BD} \bigcap \mathcal{CD}$. Assuming such an algorithm exists, our maps in \secref{sec:planar} would imply that all $\mathcal{AD} \bigcap \mathcal{BD}$ is approximable on planar graphs, and our maps in \secref{sec:bipartite} would imply that all $\overline{\mathcal{AD}} \bigcap \overline{\mathcal{BD}} \bigcap \overline{\mathcal{CD}}$ and $\overline{\mathcal{D}}$ are approximable on bipartite graphs.
We  note that the approximation in $\overline{\mathcal{A}}$, $\overline{\mathcal{B}}$, and $\overline{\mathcal{C}}$ is proved to be NP-hard even on bipartite graphs~\cite{DBLP:journals/corr/abs-1811-03126}.

In \secref{sec:planar}, the canonical orientation has the same weight $c$ on every vertex and we are able to obtain algorithms for the eight-vertex model under parameter settings where $c$ is relatively large, e.g. the region $\overline{\mathcal{C}}$; in \secref{sec:bipartite}, the canonical orientation has the same weight $d$ on every vertex and we are able to obtain algorithms for parameter settings where $d$ is relatively large, e.g. region $\overline{\mathcal{D}}$.
In general, the paradigm proposed in this paper can be applied to the study of the eight-vertex model on other classes of graphs in additional to planar/bipartite/torus graphs.
In particular, the methodology can be readily extended to any class of graphs with a ``canonical'' even orientation where every vertex has the same weight (one of $a$, $b$, $c$, or $d$).
}

\clearpage
\appendix

%
%

\section{}\label{app:holo_trans}

The readers may have noticed that even though 
 $Z = \frac{1}{\sqrt{2}}\left[\begin{smallmatrix} 1 & 1 \\ i & -i \end{smallmatrix}\right]$ is a complex-valued matrix, under the  $Z$-transformation
not only the binary {\sc Equality} function $(=_2)$ is transformed
to the binary {\sc Disequality} function $(\neq_2)$,
the arity 4 constraint function $f$ is also transformed to a
\emph{real-valued} constraint function $Z^{\otimes 4} f$.

This is not a coincidence, but a consequence of the fact that
$f$ satisfies arrow reversal symmetry. 


We say a real-valued constraint function $f$ satisfies arrow reversal symmetry
if for all  $(a_1, \ldots, a_n) \in \{0, 1\}^n$,
\[f(a_1, \ldots, a_n) 
= f(\overline{a_1}, \overline{a_2}, \ldots, \overline{a_n}),\]
where $\overline{a_i} = 1 - a_i$ for all $i$.
\begin{lemma}
A real-valued $f$ of arity $n$ satisfies arrow reversal symmetry,
 if and only if $Z^{\otimes n} f$ is real-valued.
\end{lemma}
\begin{proof}
Suppose $f$ satisfies arrow reversal symmetry. Denote by $\widehat{f}
= Z^{\otimes n} f$.
We have
$2^{n/2}\widehat{f}=
 \left[\begin{smallmatrix} 1 & 1 \\ i & -i \end{smallmatrix}\right]^{\otimes n}
f$, and thus for  all $(a_1, \ldots, a_n) \in \{0, 1\}^n$,
\[2^{n/2}\widehat{f}_{a_1 \ldots a_n}
=\sum_{(b_1, \ldots, b_n) \in \{0, 1\}^n }
f_{b_1, \ldots, b_n}
\prod_{1 \le j \le n}
\left\{(-1)^{ a_j  b_j}i^{a_j}
\right\}.\]
Hence, taking complex conjugation,
\begin{eqnarray*}
2^{n/2}\overline{\widehat{f}_{a_1 \ldots a_n}}
&=&
\sum_{(b_1, \ldots, b_n) \in \{0, 1\}^n }
{f_{b_1 \ldots b_n}}
\prod_{1 \le j \le n}\left\{
(-1)^{ a_j  b_j}
(-i)^{ a_j } \right\}\\
&=&
\sum_{(c_1, \ldots, c_n) \in \{0, 1\}^n }
f_{c_1 \ldots c_n}
\prod_{1 \le j \le n}\left\{
(-1)^{a_j  (1-c_j)}
(-i)^{a_j}\right\}
\\
&=& 2^{n/2}\widehat{f}_{a_1 \ldots a_n}.
\end{eqnarray*}

Now in the opposite direction,
suppose 
$\widehat{f}$ is real.
We have $Z^{-1} = \frac{1}{\sqrt{2}}
\left[\begin{smallmatrix} 1 & -i \\ 1 & i \end{smallmatrix}\right]$, hence
by the inverse transformation
$2^{n/2}f=
 \left[\begin{smallmatrix} 1 & -i \\ 1 & i \end{smallmatrix}\right]^{\otimes n}
\widehat{f}$, and thus for  all $(a_1, \ldots, a_n) \in \{0, 1\}^n$,
\[2^{n/2}f_{a_1 \ldots a_n}
=\sum_{(b_1, \ldots, b_n) \in \{0, 1\}^n }
\widehat{f}_{b_1, \ldots, b_n}
\prod_{1 \le j \le n}
\left\{(-1)^{ a_j  b_j}(-i)^{b_j}
\right\}.\]
So
\begin{eqnarray*}
2^{n/2}f_{\overline{a_1} \ldots \overline{a_n}}
&=& 
\sum_{(b_1, \ldots, b_n) \in \{0, 1\}^n }
\widehat{f}_{b_1, \ldots, b_n}
\prod_{1 \le j \le n}
\left\{(-1)^{ (1-a_j)  b_j}(-i)^{b_j}
\right\}\\
&=& 
\sum_{(b_1, \ldots, b_n) \in \{0, 1\}^n }
\widehat{f}_{b_1 \ldots b_n}
\prod_{1 \le j \le n}\left\{
(-1)^{a_j  b_j}
i^{b_j}\right\}
\\
&=&
\overline{2^{n/2}f_{a_1 \ldots a_n}}\\
&=&
2^{n/2}f_{a_1 \ldots a_n}.
\end{eqnarray*}
\end{proof}

\clearpage

\bibliography{reference}{}
\bibliographystyle{alpha}


\end{document}